\newcommand{\nc}{\newcommand}
\nc{\ben}{\begin{eqnarray}}
\nc{\een}{\end{eqnarray}}
\newcommand{\beqa}{\begin{eqnarray}}
\newcommand{\eeqa}{\end{eqnarray}}
\nc{\Z}{{\bold Z}}
\newcommand{\fpt}[7]{{}_4\phi_3\left[\begin{matrix} #1 , #2, #3, #4 \\
#5, #6, #7 \end{matrix}\,; q^2,q^2\right]}
\newtheorem{lem}{Lemma}[section]
\newtheorem{prop}{Proposition}[section]
\newtheorem{defn}{Definition}[section]
\newtheorem{hyp}{Hypothesis}
\newcommand{\cal}{\mathcal}
\newcommand{\tA}{\textsf{A}}
\newcommand{\tW}{\textsf{W}}
\newcommand{\tb}{\mathsf{b}} 
\newcommand{\tc}{\mathsf{c}}
\newcommand{\cV}{\mathcal{V}}
\newcommand{\cW}{\mathcal{W}}
\newcommand{\non}{\nonumber}
\numberwithin{equation}{section}
\begin{document}

\title[The $q$-Racah polynomials from scalar products of Bethe states
]{The $q$-Racah polynomials from scalar products of Bethe states
}
\author{Pascal Baseilhac$^{*}$}
\address{$^*$ Institut Denis-Poisson CNRS/UMR 7013 - Universit\'e de Tours - Universit\'e d'Orl\'eans
Parc de Grammont, 37200 Tours, 
FRANCE}
\email{pascal.baseilhac@idpoisson.fr}

\author{Rodrigo A. Pimenta$^{**}$}
\address{$^{**}$ Department of Physics and Astronomy, University of Manitoba, Winnipeg, R3T 2N2, CANADA} 
\email{rodrigo.alvespimenta@umanitoba.ca}

\begin{abstract} The $q$-Racah polynomials are expressed in terms of certain ratios of scalar products of Bethe states associated with
Bethe equations of either homogeneous or inhomogeneous type. This result is obtained by combining the theory of
Leonard pairs and the modified algebraic Bethe ansatz.  
\end{abstract}

\maketitle

	\vskip -0.5cm
	
	{\small MSC:\  33D45; 81R50; \ 81U15.}

	{{\small  {\it \bf Keywords}: Askey-Wilson algebra; Leonard pairs;  Orthogonal polynomials; Bethe ansatz}}

\section{Introduction}
At the top of the hierarchy of  classical $q$-hypergeometric orthogonal polynomials known as the discrete Askey scheme \cite{AW79,KS96}, the $q$-Racah polynomials satisfy a bispectral property that is encoded into the representation theory of 
the Askey–Wilson algebra \cite{T87,Z91} using the theory of Leonard pairs \cite{T04,Ter04}. More generally, it is known that 
the entries of the transition matrices relating different eigenbases of elements of a Leonard pair correspond to discrete orthogonal polynomials \cite{T03}.  All other families of orthogonal polynomials of the scheme can be reached from the $q$-Racah polynomials by various limit transitions (either in the scalar parameters entering into the definition of the polynomials, or $q\rightarrow 1$). See \cite{KS96,Ko10} for details.
\vspace{1mm}

Recently, the spectral problem for certain combinations of elements of a Leonard pair of $q$-Racah type - known as the Heun-Askey-Wilson operator - has been solved using the theory of Leonard pairs combined with the framework of the modified algebraic Bethe ansatz \cite{BP19}.
The modified algebraic Bethe ansatz
was developed to analyze the eigenproblem of quantum spin chains with U(1) symmetry breaking boundary fields \cite{BC13,B15,C15,ABGP15,BP15}. In this approach, the eigenvalues are expressed in terms of Bethe roots satisfying a system of transcendental equations known as the Bethe equations whereas the eigenvectors are Bethe states. In some instances, the Bethe equations are of the inhomogeneous type \cite{ODBAbook}. For the case $q=1$, see the analog analysis in the recent works \cite{Nico1,Nico2}. As observed in \cite{BP19}, some examples of eigenbases for Leonard pairs of $q$-Racah type can be constructed as Bethe states.\vspace{1mm}

The purpose of this letter is to study in details different types of eigenbases and dual eigenbases of Leonard pairs of $q$-Racah type using the framework of the modified algebraic Bethe ansatz.  Remarkably, the eigenbases can be built from Bethe states associated with homogeneous or inhomogeneous Bethe equations.
This leads to an interpretation of the $q$-Racah polynomials as certain ratios of scalar products of
various types of Bethe states. \vspace{1mm}  

The text is organized as follows. In Section 2, the concept of Leonard pairs, related eigenbases and the interpretation
of the correspondence between the entries of the transition matrix and the $q$-Racah polynomials are recalled. In Section 3, 
 eigenbases and dual eigenbases for a Leonard pair of $q$-Racah type  are constructed in terms of Bethe states
 and dual Bethe states associated with Bethe equations of either homogeneous or inhomogeneous type. 
In particular, this leads to certain identities relating Bethe states of homogeneous type and Bethe states of inhomogenous type. 
Using the theory of Leonard pairs, a correspondence between  $q$-Racah orthogonal polynomials and ratios of scalar products of Bethe states follows, see Subsection \ref{sec:qrac}.
Concluding remarks are given in the last section. Part of the necessary material for the proofs can be found in \cite{BP19} and Appendix A.

\vspace{3mm}

{\bf Notations:} The parameter $q$ is assumed not to be a root of unity and $q \neq 1$. We write $[X,Y]_q = qXY - q^{-1}YX$ and $[n]_q=\frac{q^n-q^{-n}}{q-q^{-1}}$.
 The identity element is denoted $\mathcal{I}$. We use the standard $q$-shifted factorials:
\beqa
(a;q)_n = \prod_{k=0}^{n-1}(1-aq^k)\ ,\qquad (a_1,a_2,\cdots,a_k;q)_n = \prod_{j=1}^{k}(a_j;q)\ ,
\eeqa
and define
\beqa
b(x)=x-x^{-1}\,\,.\label{b} 
\eeqa

\section{The Askey-Wilson algebra,  Leonard pairs and transition matrix 
}\label{sec2}
In this section, we introduce the defining relations of the Askey-Wilson algebra with generators  $\tA,\tA^*$.  The concept of a Leonard pair associated with $\tA,\tA^*$ is briefly reviewed. 
The eigenbases and dual eigenbases for the Leonard pair are defined.  Then, we recall the transition matrices relating those eigenbases, and how they are expressed in terms of the $q$-Racah orthogonal polynomials and scalar products of Leonard pairs' eigenvectors \cite{Z91,T03,T04,Ter04}.

\subsection{Askey-Wilson algebra and Leonard pairs}
Let $\rho,\omega,\eta,\eta^* \in {\mathbb C}^*$ be generic. The  Askey-Wilson algebra (AW) is generated by $\tA,\tA^*$ subject to the relations \cite{T87,Z91}
\beqa
&&\big[\tA,\big[\tA,\tA^*\big]_q\big]_{q^{-1}}=  \rho \,\tA^*+\omega \,\tA+\eta\mathcal{I} \ , \label{aw1} \\
&&\big[\tA^*,\big[\tA^*,\tA\big]_q\big]_{q^{-1}}= \rho \,\tA+\omega \,\tA^*+\eta^*\mathcal{I} \ .\label{aw2}
\eeqa
%

Let $\cV$ be a vector space of positive finite dimension $\dim({\cV})=2s+1$, where $s$ is an integer or half-integer. Define the algebra homomorphism $\pi: {\rm AW} \rightarrow {\rm End}(\cal V)$. Assume 
$\pi(\tA),\pi(\tA^*)$ are diagonalizable on  ${\cV}$, each  multiplicity-free and ${\cV}$  is irreducible. Then, by  \cite[Theorem 6.2]{T04},  $\pi(\tA),\pi(\tA^*)$ is a Leonard pair \cite[Definition 1.1]{T04}. Given  the eigenvalue sequence $\{\theta_M\}_{M=0}^{2s}$ associated with  $\pi(\tA)$  (resp.  the eigenvalue sequence $\{\theta_N^*\}_{N=0}^{2s}$ associated with  $\pi(\tA^*)$), one associates an eigenbasis with vectors  $\{|\theta_M \rangle \}_{M=0}^{2s}$ (resp. an eigenbasis with vectors $\{|\theta^*_N\rangle \}_{N=0}^{2s}$). For a Leonard pair, recall that: (i)  in the eigenbasis of  $\pi(\tA)$, then  $\pi(\tA^*)$ acts as a tridiagonal matrix; (ii) in the eigenbasis of  $\pi(\tA^*)$, then  $\pi(\tA)$ acts as a tridiagonal matrix:
\beqa
\qquad \quad \pi(\tA) |\theta_M\rangle &=& \theta_M|\theta_M\rangle \ ,\quad  \pi(\tA^*) |\theta_M\rangle =
A^*_{M+1,M} |\theta_{M+1}\rangle +  A^*_{M,M} |\theta_{M}\rangle +  A^*_{M-1,M}|\theta_{M-1}\rangle \ ,\label{tridAstar}\\
\qquad \quad  \pi(\tA^*) |\theta^*_N \rangle &=&  \theta^*_N |\theta^*_N \rangle  \ ,\quad  \pi(\tA)|\theta^*_N \rangle   =
 A_{N+1,N} |\theta^*_{N+1} \rangle  +   A_{N,N} |\theta^*_N \rangle  +   A_{N-1,N}|\theta^*_{N-1} \rangle \ . \label{tridAstar2}
\eeqa
Here $A^*_{-1,0}=A^*_{2s+1,2s}=A_{-1,0}=A_{2s+1,2s}=0$, and the explicit expressions for the coefficients $\{A^*_{M\pm 1,M},A^*_{M,M}\}$ and $\{A_{N\pm 1,N},A_{N,N}\}$ in terms of the eigenvalue sequences are given in \cite{Ter04}.\vspace{1mm}

Let $\tilde\cV$ be the dual vector space of $\cV$, i.e. the vector space  $\tilde\cV$ of all linear functionals from $\cV$ to ${\mathbb C}$. Define the family of covectors $\{\langle \theta_M |\}_{M=0}^{2s}\in \tilde\cV$ (resp. $\{\langle \theta^*_M |\}_{N=0}^{2s}\in \tilde\cV$) associated with the eigenvalue sequence $\{\theta_M\}_{M=0}^{2s}$ (resp.   $\{\theta_N^*\}_{N=0}^{2s}$) such that:
\beqa
\qquad \quad \langle \theta_M |\pi(\tA)  &=& \langle \theta_M |\theta_M\ ,\quad  \langle \theta_M |\pi(\tA^*) =\langle \theta_{M+1 }|
\tilde{A}^*_{M,M+1}  +  \langle \theta_{M }|\tilde{A}^*_{M,M}  + \langle \theta_{M-1 } |\tilde{A}^*_{M,M-1} \ ,\label{tridAstardual}\\
\qquad \quad \langle \theta_N^* |\pi(\tA^*)  &=& \langle \theta_N^* |\theta_N^*\ ,\quad  \langle \theta_N^* |\pi(\tA) =\langle \theta_{N+1 }^*|
\tilde{A}_{N,N+1}  +  \langle \theta_{N }^*|\tilde{A}_{N,N}  + \langle \theta_{N-1 } |\tilde{A}_{N,N-1}\ . 
\label{tridAstar2dual}
\eeqa
The coefficients $\tilde{A}^*_{M,M'},\tilde{A}_{N,N'}$ are determined as follows.
Let us associate a column to a vector $|v\rangle \in \cV$ and a row  to a covector $\langle \tilde{v}|\in \tilde{\cV}$. Then,  one introduces the scalar product $\langle .|. \rangle : \tilde{\cV} \times \cV \rightarrow {\mathbb C}$. The representation is irreducible, so $\langle .|. \rangle$ is non-degenerate\footnote{There are no non-trivial subspaces  $\tilde\cW = \{\tilde{w} \in \tilde{\cV} | \langle \tilde{w}  | \cV \rangle = 0\}$ and $\cW = \{w \in \cV | \langle \tilde{\cV}| w \rangle = 0\}$).
}.
Using (\ref{tridAstar}), (\ref{tridAstardual}) and (\ref{tridAstar2}), (\ref{tridAstar2dual}), the equalities $(\langle \theta_{M'} |\pi(\tA))| \theta_M \rangle = \langle \theta_{M'} |(\pi(\tA)| \theta_M \rangle)$ and $(\langle \theta^*_{N'} |\pi(\tA^*))| \theta_N^* \rangle = \langle \theta_{N'}^* |(\pi(\tA^*)| \theta_N^* \rangle)$
imply the orthogonality relations
\beqa
\langle \theta_{M'} |\theta_M \rangle  = \delta_{MM'}\xi_M \ ,\qquad \langle \theta_{N'}^* |\theta_N^* \rangle  = \delta_{NN'}\xi_N^* \label{orthcond}\ ,
\eeqa
where $\delta_{ij}$ denotes the Kronecker symbol and the normalization factors  $\xi_M \neq 0$, $\xi^*_N\neq 0$ are introduced.
Now, using (\ref{tridAstar}), (\ref{tridAstardual}) and (\ref{tridAstar2}), (\ref{tridAstar2dual}) together with (\ref{orthcond}), the equalities $(\langle \theta_{M'} |\pi(\tA^*))| \theta_M \rangle = \langle \theta_{M'} |(\pi(\tA^*)| \theta_M \rangle)$ and $(\langle \theta^*_{N'} |\pi(\tA))| \theta_N^* \rangle = \langle \theta_{N'}^* |(\pi(\tA)| \theta_N^* \rangle)$ imply:
\beqa
 \tilde{A}^*_{M,M'} &=&  A^*_{M,M'} \frac{\xi_{M}}{\xi_{M'}}\quad \mbox{for $M'=M+1,M,M-1$}\ ,\\
\tilde{A}_{N,N'} &=&  A_{N,N'} \frac{\xi_{N}^*}{\xi_{N'}^*} \quad \mbox{for $N'=N+1,N,N-1$}\ .
\eeqa

In the following, we consider  the eigenvalue sequences of the form \cite[Theorem 4.4 (case I)]{Ter03}:
\beqa
 \theta_M= \tb q^{2M} + \tc q^{-2M}\ , \quad   \theta^*_N= \tb^* q^{2N} + \tc^*q^{-2N}\ ,\label{st}
\eeqa
where $\tb,\tc,\tb^*,\tc^* \in {\mathbb C}^*$.
For this parametrization, the structure constant $\rho$ in (\ref{aw1}), (\ref{aw2}) is given by \cite[Lemma 4.5]{Ter03}:
\beqa
\rho=-\tb\tc(q^2-q^{-2})^2=-\tb^*\tc^*(q^2-q^{-2})^2
\ .\label{rho}
\eeqa
Without loss of generality, the other structure constants can be written in the form
:
\beqa
\omega&=& (q-q^{-1})^2 \left( \tb\tc (\zeta^2+\zeta^{-2})(q^{2s+1} + q^{-2s-1}) - (\tb q^{2s}+\tc q^{-2s}) (\tb^*q^{2s}+\tc^*q^{-2s}) \right)\ ,\label{scpar1}\\
\eta&=& -\frac{(q^2-q^{-2})^2}{(q+q^{-1})}  \tb\tc  \left(  (\tb q^{2s}+\tc q^{-2s})(\zeta^2+\zeta^{-2}) - (\tb^*q^{2s}+\tc^*q^{-2s})(q^{2s+1} + q^{-2s-1}) \right)\ ,\label{scpar2}\\
\eta^*&=& -\frac{(q^2-q^{-2})^2}{(q+q^{-1})}  \tb^*\tc^*  \left(  (\tb^*q^{2s}+\tc^*q^{-2s})(\zeta^2+\zeta^{-2}) - (\tb q^{2s}+\tc q^{-2s})(q^{2s+1} + q^{-2s-1}) \right)\ \label{scpar3}
\eeqa
where $\zeta \in {\mathbb C}^*$ is generic. Adapting the notations from \cite[Theorem 5.3]{T04} with \cite[Lemma 10.3]{Ter04} compared with (\ref{scpar1})-(\ref{scpar3}), the coefficients $\{A^*_{M,M'}\}$,  $\{A_{N,N'}\}$ in (\ref{tridAstar}),  (\ref{tridAstar2}) read as follows: 
\beqa
A^*_{M,M-1}&=&   q^{2-4s} \frac{ (1-q^{2M})(\tc-\tb q^{2M+4s}) (\tb^*q^{2s-1}\zeta^{-2} +\tb q^{2M-2})(   \tc q^{2s-1}\zeta^2 + \tc^*q^{2M-2}) }{(\tc-\tb q^{4M-2})(\tc-\tb q^{4M})}  \ ,\label{amm1}\\
A^*_{M-1,M}&=&  \frac{ (1-q^{2M-4s-2})(\tc- \tb q^{2M-2}) (\tc+\tb^*\zeta^{-2}q^{2M+2s-1})  (\tc^*+\tb\zeta^{2}q^{2M+2s-1})}{(\tc-\tb q^{4M-4})(\tc-\tb q^{4M-2})} \ ,\label{amm2}\\
A^*_{M,M}&=& \theta^*_0 -  A^*_{M,M+1} - A^*_{M,M-1} \ .\label{amm3} \ 
\eeqa
The coefficients $\{A_{N,N'}\}$ are obtained from $\{A^*_{M,M'}\}$ using the transformation $(\tb,\tc,\zeta,M) \leftrightarrow (\tb^*,\tc^*,\zeta^{-1},N) $.\vspace{1mm}

In addition to the Askey-Wilson relations (\ref{aw1}),  (\ref{aw2}), by the Cayley-Hamilton theorem one has:
\beqa
\prod_{M=0}^{2s}\left( \pi(\tA) - \theta_M\right) = 0\ , \quad \prod_{N=0}^{2s}\left( \pi(\tA^*) - \theta^*_N\right) = 0
\ .\label{polyc}
\eeqa

Note that explicit examples of Leonard pairs associated with embeddings of the AW algebra into  $U_q(sl_2)$ or   $(U_q(sl_2))^{\otimes 3 }$ are known, see \cite{GZ2,GZ93b,Huang}.\vspace{1mm}

\subsection{Transition matrix and the $q$-Racah polynomials}
Given a Leonard pair, the transition matrices relating the eigenbases  $\{|\theta_M \rangle \}_{M=0}^{2s}$ and  $\{|\theta^*_N \rangle \}_{N=0}^{2s}$ are expressed in terms of $q$-Racah polynomials \cite{Z91,Ter04}. Consider
\beqa
  |\theta^*_N\rangle = \sum_{M=0}^{2s} P_{MN}  |\theta_M \rangle \qquad \mbox{and} \qquad |\theta_M\rangle = \sum_{N=0}^{2s} 
	(P^{-1})_{NM}|\theta^*_N \rangle \label{transeq}
\eeqa 
where  $P$ (resp. $P^{-1}$) 
denotes the transition matrix from the basis $\{|\theta_M \rangle \}_{M=0}^{2s}$ to the basis $\{|\theta^*_N \rangle \}_{N=0}^{2s}$ (resp. the inverse transition matrix from the basis $\{|\theta^*_N \rangle \}_{N=0}^{2s}$ to the basis $\{|\theta_M \rangle \}_{M=0}^{2s}$). From (\ref{transeq}), using (\ref{orthcond}), the entries of the transition matrix are given by the scalar products:
%
\beqa
P_{MN}=\langle \theta_M |\theta^*_N \rangle/\langle \theta_M |\theta_M \rangle \qquad \mbox{and} \qquad (P^{-1})_{NM}=\langle \theta^*_N |\theta_M \rangle/ \langle \theta^*_N |\theta^*_N \rangle\ .\label{transcal}
\eeqa

Similarly, the dual eigenvectors are related as follows:
\beqa
  \langle \theta^*_N| = \sum_{M=0}^{2s} \frac{\xi^*_N}{\xi_M} P^{-1}_{NM}  \langle \theta_M | \qquad \mbox{and} \qquad \langle \theta_M | = \sum_{N=0}^{2s} 
	\frac{\xi_M}{\xi^*_N} P_{MN} \langle \theta^*_N | \label{transeqdual}\ .
\eeqa 
%
%
%
Introduce the $q$-Racah polynomials:
\beqa
R_M(\theta^*_N)= \fpt{  q^{-2M}  }{  \frac{\tb}{\tc}q^{2M}  }{  q^{-2N}  }{  \frac{\tb^*}{\tc^*}q^{2N}  }{  -\frac{\tb}{\tc^*}q^{2s+1} \zeta^{2}  }{  -\frac{\tb^*}{\tc}q^{2s+1} \zeta^{-2}  }{  q^{-4s}  }\ . \label{qracahpoly}
\eeqa
Adapting the notations of \cite{Ter04}, the entries of the transition matrices are given by:
\beqa
P_{MN}=k_N R_M(\theta^*_N) \qquad \mbox{and}  \qquad (P^{-1})_{NM} =   \nu_0^{-1} k^*_M R_M(\theta^*_N) \label{PMN}
\eeqa
where
\beqa
k_N= \frac{ (-\frac{\tb^*}{\tc}q^{2s+1}\zeta^{-2},-\frac{\tb}{\tc^*}q^{2s+1}\zeta^{2},\frac{\tb^*}{\tc^*},q^{-4s};q^2)_N }{ ( q^2, -\frac{\tb^*}{\tb}q^{1-2s}\zeta^{-2},  -\frac{\tc}{\tc^*}q^{1-2s}\zeta^{2}, \frac{\tb^*}{\tc^*}q^{4s+2} ;q^2)_N} \frac{(1-\frac{\tb^*}{\tc^*}q^{4N})}{(\frac{\tb}{\tc})^N(1-\frac{\tb^*}{\tc^*})}      \ ,\qquad k^*_M= k_N|_{N\rightarrow M, \tb\leftrightarrow \tb^*,\tc\leftrightarrow \tc^*, \zeta \rightarrow \zeta^{-1}} \ , \nonumber
\eeqa
\beqa
 \nu_0=   \frac{ (\frac{\tb}{\tc}q^2, \frac{\tb^*}{\tc^*}q^{2} ;q^2)_{2s}}{(-\frac{\tb^*}{\tc}q^{2s+1}\zeta^{-2})^{2s} ( -\frac{\tb}{\tb^*}q^{1-2s}\zeta^2,  -\frac{\tc}{\tc^*}q^{1-2s}\zeta^{2};q^2)_{2s}}\ .\nonumber  
\eeqa

From (\ref{transeq}), using (\ref{tridAstar}), (\ref{tridAstar2}) one finds that the transition matrix coefficients satisfy three-term recurrence relations with respect to $M,N$ 
\cite{Z91,Ter04}. Using (\ref{PMN}), one recovers the well-known relations:
\beqa
\theta_N^*  R_M(\theta^*_N)     &=& A^*_{M,M+1} R_{M+1} (\theta^*_N)   +  A^*_{M,M} R_{M} (\theta^*_N)   + A^*_{M,M-1} R_{M-1} (\theta^*_N) \ ,\label{rec}\\
\theta_M  R_M(\theta^*_N)     &=& A_{N,N+1} R_{M} (\theta^*_{N+1})     +  A_{N,N} R_{M} (\theta^*_N)   + A_{N,N-1} R_{M} (\theta^*_{N-1}) \ .\label{qdiff}
\eeqa
%
%
%
Combining the two relations in (\ref{transeq}), the orthogonality relations satisfied by the $q$-Racah polynomials follow \cite[Section 16]{T03}:
\beqa
\sum_{N=0}^{2s} k_N R_M(\theta^*_N)R_{M'}(\theta^*_N) = \nu_0(k^*_M)^{-1} \delta_{MM'}\ , \qquad \sum_{M=0}^{2s} k_M^* R_M(\theta^*_N)R_{M}(\theta^*_{N'}) = \nu_0(k_{N})^{-1} \delta_{NN'}\ .
\eeqa

From (\ref{transcal}), it turns out that the scalar products between eigenvectors of Leonard pairs are the basic building blocks for the $q$-Racah polynomials. As $R_0(\theta^*_N)=1$ and $R_M(\theta^*_0)=1$, from (\ref{PMN}) one gets $k_N=\langle \theta_0 |\theta^*_N \rangle/\langle \theta_0 |\theta_0 \rangle $ and $\nu_0^{-1}k^*_M=\langle \theta^*_0 |\theta_M \rangle/\langle \theta^*_0 |\theta^*_0 \rangle$. The following result can be viewed as a variation of \cite{Z91}. Up to normalization, see also \cite[Theorem 14.6 and 15.6]{T03}.
For any $0\leq N,M \leq 2s$, the $q$-Racah polynomials are given by:
\beqa
R_M(\theta^*_N) &=& \frac{\langle \theta_M |\theta_N^* \rangle}{\langle \theta_0 |\theta^*_N \rangle}\frac{\langle \theta_0 |\theta_0 \rangle}{\langle \theta_M |\theta_M \rangle}\label{Rac1}\\
&=& \frac{\langle \theta^*_N |\theta_M \rangle}{\langle \theta^*_0 |\theta_M \rangle}\frac{\langle \theta^*_0 |\theta_0^* \rangle}{\langle \theta^*_N |\theta^*_N \rangle}\  .\label{Rac2}
\eeqa

\vspace{2mm}

\section{Eigenbases for Leonard pairs from Bethe states}
The AW algebra admits a presentation in the form of a reflection algebra \cite{Za95,Bas04}. This presentation allows one to apply the technique of algebraic Bethe ansatz in order to diagonalize $\pi({\textsf A}), \pi({\textsf A}^*)$. The purpose of this section is to recall the eigenbases - see \cite{BP19} for details - and construct the dual eigenbases associated with the Leonard pair $\pi({\textsf A}), \pi({\textsf A}^*)$ in terms of the so-called Bethe states
and dual Bethe states. The Bethe eigenstates and dual eigenstates built  are essentially of two different types, called either of {\it homogeneous} type or of {\it inhomogeneous} type. The necessary material for the analysis below is found in Appendix \ref{apA} and \cite{BP19}.

\subsection{Bethe  states and dual Bethe states}
In the algebraic Bethe ansatz approach, the main ingredients  are
the `dynamical' operators \cite{gauge} $\{\mathscr{A}^{\epsilon}(u,m), \mathscr{B}^{\epsilon}(u,m),\mathscr{C}^{\epsilon}(u,m),\mathscr{D}^{\epsilon}(u,m)\}$ that satisfy a set of exchange relations (\ref{comBdBd})-(\ref{comDdCd}). Importantly, the dynamical operators  are polynomials of maximal degree $2$ in the elements $\tA,\tA^*$, depending on $\alpha,\beta\in {\mathbb C}$ and the so-called spectral parameter $u\in {\mathbb C}^*$. We refer the reader to \cite[Appendix A]{BP19} for their explicit expressions. 
 \vspace{1mm}

The starting point of the construction of Bethe states is the identification of so-called reference states. 
Given a Leonard pair with (\ref{tridAstar}), (\ref{tridAstar2}), the following lemma is derived \cite[Propositions 3.1, 3.2]{BP19}. Let $m_0$ be an integer. 
\begin{lem}\label{lem:g1} If the parameter $\alpha$ is such that:
\beqa
\mbox{$(q^2-q^{-2})\chi^{-1}\alpha \tc^*q^{m_0}=1$ \qquad (resp. $(q^2-q^{-2})\chi^{-1}\alpha \tb q^{-m_0}=-1$)}\label{ab}
\eeqa
then  
%
\beqa
\pi(\mathscr{C}^+(u,m_0))
|\theta^*_0\rangle =0\, \qquad \mbox{(resp.  $\pi(\mathscr{C}^-(u,m_0))
|\theta_0\rangle  =0\,$)}.\label{cmO}
\eeqa
\end{lem}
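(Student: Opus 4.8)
The plan is to reduce the statement to an explicit linear-algebra identity in the eigenbasis of $\pi(\tA^*)$ (resp. $\pi(\tA)$), exploiting that the dynamical operator $\pi(\mathscr{C}^+(u,m_0))$ is a degree-$\le 2$ polynomial in $\tA,\tA^*$ whose coefficients depend on $\alpha,\beta,u$. First I would substitute the explicit expression for $\mathscr{C}^+(u,m_0)$ from \cite[Appendix A]{BP19} and expand $\pi(\mathscr{C}^+(u,m_0))$ as a polynomial in $\pi(\tA),\pi(\tA^*)$. Since the reference vector $|\theta^*_0\rangle$ is the lowest eigenvector of $\pi(\tA^*)$, it satisfies $\pi(\tA^*)|\theta^*_0\rangle=\theta^*_0|\theta^*_0\rangle$ with $\theta^*_0=\tb^*+\tc^*$ from (\ref{st}), and from the tridiagonal action (\ref{tridAstar2}) the operator $\pi(\tA)$ sends $|\theta^*_0\rangle$ into $\mathrm{span}\{|\theta^*_0\rangle,|\theta^*_1\rangle\}$ (the lower coefficient $A_{-1,0}$ vanishes). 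Therefore $\pi(\mathscr{C}^+(u,m_0))|\theta^*_0\rangle$ lives in $\mathrm{span}\{|\theta^*_0\rangle,|\theta^*_1\rangle,|\theta^*_2\rangle\}$, and the claim is equivalent to the vanishing of at most three scalar coefficients.

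The key step is then to compute these coefficients and show they vanish \emph{precisely} when the tuning condition (\ref{ab}) holds. I would organize this by acting with the building blocks: using $\pi(\tA^*)|\theta^*_0\rangle=\theta^*_0|\theta^*_0\rangle$ repeatedly to collapse all $\pi(\tA^*)$-factors to the scalar $\theta^*_0$, and using (\ref{tridAstar2}) with the explicit entries $\{A_{N\pm1,N},A_{N,N}\}$ (obtained from (\ref{amm1})--(\ref{amm3}) under $(\tb,\tc,\zeta,M)\leftrightarrow(\tb^*,\tc^*,\zeta^{-1},N)$, evaluated at $N=0,1$) to handle the $\pi(\tA)$-factors. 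After collecting terms, the coefficient of $|\theta^*_2\rangle$ should cancel automatically from the structure of $\mathscr{C}^+$, while the coefficients of $|\theta^*_1\rangle$ and $|\theta^*_0\rangle$ will each factor through a common linear expression in $\alpha$; imposing that this expression equals zero yields exactly $(q^2-q^{-2})\chi^{-1}\alpha\,\tc^* q^{m_0}=1$. The resp.\ case is handled identically with $|\theta_0\rangle$ the lowest $\pi(\tA)$-eigenvector, $\pi(\mathscr{C}^-(u,m_0))$ in place of $\pi(\mathscr{C}^+(u,m_0))$, and the symmetry substitution $(\tb,\tc,\zeta)\leftrightarrow(\tb^*,\tc^*,\zeta^{-1})$ producing the condition $(q^2-q^{-2})\chi^{-1}\alpha\,\tb q^{-m_0}=-1$.

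The main obstacle I anticipate is bookkeeping rather than conceptual: the operator $\mathscr{C}^+(u,m_0)$ is a two-parameter, spectral-parameter-dependent quadratic in $\tA,\tA^*$, so its expansion produces cross-terms such as $\pi(\tA)\pi(\tA^*)$ and $\pi(\tA^*)\pi(\tA)$ whose noncommutativity must be resolved using the Askey--Wilson relations (\ref{aw1})--(\ref{aw2}) before the eigenvalue substitution can be applied cleanly. The delicate point is verifying that \emph{all} three output coefficients vanish under a single scalar condition on $\alpha$, rather than imposing independent constraints; this consistency is guaranteed by the reflection-algebra origin of the dynamical operators, but confirming it requires the precise form of the structure constants (\ref{rho})--(\ref{scpar3}) and the $u$-dependence, which I expect to factor out of the vanishing condition entirely. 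Since the result is quoted from \cite[Propositions 3.1, 3.2]{BP19}, I would also cross-check the final tuning against those propositions to fix the normalization of the constant $\chi$.
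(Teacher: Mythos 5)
Your overall strategy --- expand $\pi(\mathscr{C}^+(u,m_0))$ as a combination of $\mathcal{I},\pi(\tA),\pi(\tA^*)$ and the cross words $\pi(\tA)\pi(\tA^*),\pi(\tA^*)\pi(\tA)$, act on the reference vector, and kill the resulting coefficients by tuning $\alpha$ --- is the right one, and it is essentially how the result is established in \cite[Propositions 3.1, 3.2]{BP19}; note that the present paper gives no proof of its own, it only quotes that reference. However, two technical points in your outline are wrong. First, no ``resolution of noncommutativity'' via (\ref{aw1})--(\ref{aw2}) is needed, nor is it possible: expanded out, (\ref{aw1}) reads $\tA^2\tA^*-(q^2+q^{-2})\tA\tA^*\tA+\tA^*\tA^2=\rho\,\tA^*+\omega\,\tA+\eta\,\mathcal{I}$, a \emph{cubic} relation which provides no rule for trading $\tA\tA^*$ for $\tA^*\tA$. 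None is required: one acts word by word, using $\pi(\tA)\pi(\tA^*)|\theta^*_0\rangle=\theta^*_0\,\pi(\tA)|\theta^*_0\rangle$, and computing $\pi(\tA^*)\pi(\tA)|\theta^*_0\rangle$ by first expanding $\pi(\tA)|\theta^*_0\rangle=A_{1,0}|\theta^*_1\rangle+A_{0,0}|\theta^*_0\rangle$ from (\ref{tridAstar2}) and then letting $\pi(\tA^*)$ act diagonally. Moreover, since the dynamical operators contain no word $\tA^2$ or $(\tA^*)^2$ (they are combinations of $\mathcal{I},\tA,\tA^*,\tA\tA^*,\tA^*\tA$, as used in the proof of Lemma \ref{lem:tM}), the image lies in $\mathrm{span}\{|\theta^*_0\rangle,|\theta^*_1\rangle\}$: there are only \emph{two} coefficients to annihilate, the $|\theta^*_1\rangle$ one being $A_{1,0}$ times a scalar with $A_{1,0}\neq 0$ by irreducibility; no third coefficient ``cancels automatically.''

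Second, the symmetry you invoke for the ``resp.''\ case is incorrect as stated: applying $(\tb,\tc,\zeta)\leftrightarrow(\tb^*,\tc^*,\zeta^{-1})$ to the first condition in (\ref{ab}) yields $(q^2-q^{-2})\chi^{-1}\alpha\,\tc\, q^{m_0}=1$, not $(q^2-q^{-2})\chi^{-1}\alpha\,\tb\, q^{-m_0}=-1$. The sign flip and the replacement $q^{m_0}\to q^{-m_0}$ come from the $\epsilon$-dependence of the dynamical operators (through $\gamma^{\epsilon}$ in (\ref{gam}) and the $\epsilon$-dependent coefficients), which your substitution does not account for. So either track the full exchange (Leonard-pair relabeling together with $\epsilon\to-\epsilon$), or --- as you also propose --- simply run the identical direct computation on $|\theta_0\rangle$ with $\pi(\mathscr{C}^-(u,m_0))$, using (\ref{tridAstar}) in place of (\ref{tridAstar2}). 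With these two corrections, your outline coincides with the source proof.
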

In the dual vector space, for (\ref{tridAstardual}), (\ref{tridAstar2dual}) and using \cite[Appendix A]{BP19}, analog results are derived along the same line.
\begin{lem}\label{lem:g2}  If the parameter $\beta$ is such that:
\beqa
\mbox{$(q^2-q^{-2})\chi^{-1}\beta \tb^*q^{-m_0+2}=1$ \qquad (resp. $(q^2-q^{-2})\chi^{-1}\beta \tc q^{m_0-2}=-1$)}\label{abdual}
\eeqa
then
\beqa
\langle\theta_0^*| \pi(\mathscr{B}^+(u,m_0-2)) =0\, \qquad \mbox{(resp.
$\langle\theta_0| \pi(\mathscr{B}^-(u,m_0-2))  =0\,$)}.\label{cmO}
\eeqa
\end{lem}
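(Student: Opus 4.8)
The goal is to establish Lemma \ref{lem:g2}, which is the dual statement to Lemma \ref{lem:g1}. The plan is to exploit the symmetry between the construction in $\cV$ and the parallel construction in the dual space $\tilde\cV$, rather than redoing the entire computation from scratch. Concretely, I would first write down the explicit action of $\pi(\mathscr{B}^\epsilon(u,m_0-2))$ on the dual reference covectors $\langle\theta_0^*|$ and $\langle\theta_0|$, using the explicit polynomial expressions for the dynamical operators in terms of $\tA,\tA^*$ recorded in \cite[Appendix A]{BP19}. Since $\mathscr{B}^\epsilon$ is a polynomial of degree at most $2$ in $\tA,\tA^*$, and since the covectors $\langle\theta_0^*|$, $\langle\theta_0|$ are left eigenvectors on which $\pi(\tA^*)$ (resp.\ $\pi(\tA)$) acts diagonally while $\pi(\tA)$ (resp.\ $\pi(\tA^*)$) acts tridiagonally from the right via (\ref{tridAstardual}), (\ref{tridAstar2dual}), the image $\langle\theta_0^*|\pi(\mathscr{B}^+(u,m_0-2))$ lives in the span of at most three consecutive covectors.

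The key step is to identify the precise edge where all unwanted terms vanish. The reference covector $\langle\theta_0^*|$ is the top covector in the sequence $\{\langle\theta_N^*|\}$, so applying the tridiagonal action (\ref{tridAstar2dual}) the contribution $\langle\theta_{-1}^*|$ is automatically zero because of the boundary convention $\tilde A_{N,N-1}=0$ at the edge (inherited from $A_{-1,0}=0$). What must then be arranged is that the diagonal and off-diagonal coefficients produced by $\mathscr{B}^+$ conspire to cancel, leaving a single overall scalar factor that one can force to vanish. This scalar factor is a rational expression in $q$, the spectral parameter $u$, the eigenvalue-sequence data $\tb,\tc,\tb^*,\tc^*$, the normalization constant $\chi$, and crucially the gauge parameter $\beta$. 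Setting this factor to zero is exactly the tuning condition (\ref{abdual}) on $\beta$.

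The most efficient route is to recognize that the dual relations (\ref{tridAstardual}), (\ref{tridAstar2dual}) are related to (\ref{tridAstar}), (\ref{tridAstar2}) by the transpose/adjoint pairing $\langle.|.\rangle$, under which $\mathscr{C}^\epsilon$ and $\mathscr{B}^\epsilon$ are exchanged up to the reversal of a spectral-parameter or gauge shift. I would therefore argue that the computation already performed for Lemma \ref{lem:g1} transposes to the dual setting under a substitution that maps the annihilation condition for $\pi(\mathscr{C}^+(u,m_0))$ acting on the right eigenvector $|\theta_0^*\rangle$ into the annihilation condition for $\langle\theta_0^*|$ acting on the left by $\pi(\mathscr{B}^+(u,m_0-2))$. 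Tracking how the parameters transform under this substitution — in particular the shift $m_0\mapsto m_0-2$ appearing in the argument of $\mathscr{B}^+$, the exchange $\tc^*\leftrightarrow\tb^*$ (resp.\ $\tb\leftrightarrow\tc$), the replacement $q^{m_0}\mapsto q^{-m_0+2}$, and the role of $\alpha$ versus $\beta$ — is what converts condition (\ref{ab}) into condition (\ref{abdual}).

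The main obstacle I anticipate is the bookkeeping in this transposition: verifying that the adjoint of $\mathscr{C}^+(u,m_0)$ with respect to $\langle.|.\rangle$ is exactly $\mathscr{B}^+$ evaluated at the shifted argument $m_0-2$ and with $\alpha$ replaced by $\beta$, rather than some other combination of the dynamical operators. This requires knowing precisely how the four dynamical operators behave under transposition in the basis/dual-basis pairing, including any spectral-parameter inversion $u\mapsto u^{-1}$ or sign changes that the gauge transformation introduces. Once the correct adjoint correspondence is pinned down from \cite[Appendix A]{BP19}, the two cases (the unstarred and starred, separated by ``resp.'') follow from the single computation by the symmetry exchanging $(\tb,\tc,\tb^*,\tc^*,\zeta)$ appropriately, exactly as the two cases of Lemma \ref{lem:g1} do. The remaining verification that the tuned scalar factor genuinely vanishes is then a routine $q$-factorial identity.
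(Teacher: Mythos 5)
Your fallback route --- writing $\pi(\mathscr{B}^{\pm}(u,m_0-2))$ as an explicit polynomial in $\tA,\tA^*$ via \cite[Appendix A]{BP19} and letting it act from the right on the reference covectors through the dual tridiagonal actions (\ref{tridAstardual}), (\ref{tridAstar2dual}) --- is precisely the paper's proof: the paper disposes of Lemma \ref{lem:g2} in a single sentence, asserting that the analogues of \cite[Propositions 3.1, 3.2]{BP19} ``are derived along the same line'' in the dual space. Two precisions on that computation. First, since the dynamical operators contain only words of low degree in $\tA,\tA^*$, the image of $\langle\theta_0^*|$ is supported on very few covectors (with the $\langle\theta_{-1}^*|$ component absent at the edge, as you say); but the vanishing of this image is a \emph{system} of coefficient equations (one per surviving covector), valid for all $u$, not literally ``a single overall scalar factor'' --- the substance of the proof is exactly that the explicit form of $\mathscr{B}^+$ makes all these equations collapse to the one tuning (\ref{abdual}) of $\beta$, and this should not be waved away. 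Second, your preferred transposition route is genuinely different from the paper's and would be more explanatory if completed: it would account structurally for the shift $m_0\mapsto m_0-2$ and for how (\ref{ab}) mutates into (\ref{abdual}). It is even supported by the appendix of this paper: reversing the order of all products in (\ref{comBdBd})--(\ref{comDdBd}) reproduces (\ref{comCdCd})--(\ref{comDdCd}) under the substitution $\mathscr{B}^\epsilon(v,m)\mapsto\mathscr{C}^\epsilon(v,m+2)$ with $\mathscr{A}^\epsilon,\mathscr{D}^\epsilon$ fixed (using the symmetry $w(u,v,m)=w(v,u,m)$), which is exactly the shift relating $\mathscr{C}^+(u,m_0)$ in Lemma \ref{lem:g1} to $\mathscr{B}^+(u,m_0-2)$ in Lemma \ref{lem:g2}. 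However, the step you flag as an ``obstacle'' is the whole difficulty, and it is more than bookkeeping: conditions (\ref{ab}) and (\ref{abdual}) are \emph{not} related by the naive exchange $\alpha\leftrightarrow\beta$ (one involves $\tc^*q^{m_0}$, the other $\tb^*q^{-m_0+2}$), so the putative anti-automorphism must also act nontrivially on the gauge parameters hidden in $\gamma^\epsilon$ of (\ref{gam}); certifying this forces you to open the explicit formulas of \cite[Appendix A]{BP19} anyway, at which point the direct dual-space computation chosen by the paper costs no additional effort. In short: your concrete route coincides with the paper's and is sound; your shortcut is viable and conceptually attractive, but its pivotal identification is asserted rather than verified, so as written it cannot replace the direct computation.
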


Within the algebraic Bethe ansatz framework, the fundamental eigenvectors for the Leonard pair and their duals find a natural interpretation. 
Let $|\Omega^\pm\rangle$ and  $\langle \Omega^\pm|$ denote the so-called reference and dual reference states with respect to the dynamical operators $\{\mathscr{B}^{\epsilon}(u,m),\mathscr{C}^{\epsilon}(u,m)\}$. 
\begin{defn}\label{defrefst}
\beqa
&&|\theta_0\rangle = |\Omega^-\rangle  \ , \quad   |\theta^{*}_0\rangle = |\Omega^+\rangle \nonumber \ ,\qquad
\langle\theta_0| = \langle\Omega^-| \ , \quad  \langle \theta^{*}_0|= \langle\Omega^+|   \ .
\eeqa
\end{defn}
\vspace{1mm}

According to the choice of parameters $\alpha,\beta$, the action of the dynamical operators $\mathscr{A}^\pm(u,{m_0})$ and $\mathscr{D}^\pm(u,{m_0})$ on the reference states $|\Omega^\pm\rangle$ and duals $\langle \Omega^\pm|$ are computed. Recall the parametrization  (\ref{scpar1})-(\ref{scpar3}) and define:
The vector space $\cal V$ and its dual $\tilde{\cal V}$ being finite dimensional, the following actions
of dynamical operators are considered. Recall (\ref{tridAstar})-(\ref{tridAstar2dual}). The following result extends \cite[Lemma 3.4]{BP19}, thus we skip the proof.
\begin{lem} The following holds: 
%
\beqa
\pi(\mathscr{B}^{+}(u,m_0+4s))|\theta^*_{2s}\rangle  =0\ \quad \mbox{for}\quad\  (q^2-q^{-2})\chi^{-1}\beta \tb^* q^{-m_0}=1\ ,
\eeqa
\beqa
\pi(\mathscr{B}^{-}(u,m_0+4s))|\theta_{2s}\rangle  =0\ \quad \mbox{for}\quad\  (q^2-q^{-2})\chi^{-1}\beta \tc q^{m_0}=-1\ ,
\eeqa
\beqa
\langle\theta^*_{2s}|\pi(\mathscr{C}^{+}(u,m_0+4s))  =0\ \quad \mbox{for}\quad\  (q^2-q^{-2})\chi^{-1}\alpha \tc^* q^{m_0}=1\ ,
\eeqa
\beqa
\langle\theta_{2s}|\pi(\mathscr{C}^{-}(u,m_0+4s))  =0\ \quad \mbox{for}\quad\  (q^2-q^{-2})\chi^{-1}\alpha \tb q^{-m_0}=-1\ .
\eeqa
\end{lem}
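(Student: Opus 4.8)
The plan is to mirror the derivation of Lemma \ref{lem:g1} and Lemma \ref{lem:g2}, but now at the top of the representation, replacing the reference states $|\theta_0\rangle, |\theta_0^*\rangle$ and their duals by the extremal vectors $|\theta_{2s}\rangle, |\theta_{2s}^*\rangle$ at the opposite end of the eigenvalue sequences. First I would recall from \cite[Appendix A]{BP19} the explicit polynomial expressions for the dynamical operators $\mathscr{B}^{\pm}(u,m)$ and $\mathscr{C}^{\pm}(u,m)$ in terms of $\pi(\tA),\pi(\tA^*)$, together with the dependence on $\alpha,\beta$ and on the shift parameter $m$. The key structural observation is that evaluating $\mathscr{B}^{\epsilon}$ or $\mathscr{C}^{\epsilon}$ on an eigenvector of $\pi(\tA^*)$ (resp.\ $\pi(\tA)$) produces, via the tridiagonal actions (\ref{tridAstar})--(\ref{tridAstar2dual}), a combination of neighboring eigenvectors; the annihilation condition is precisely that the off-diagonal coefficient pushing the vector \emph{outside} the allowed index range $0\le M\le 2s$ already vanishes by the boundary condition $A^*_{2s+1,2s}=A_{2s+1,2s}=0$, while the remaining surviving coefficient is forced to zero by the stated tuning of $\alpha$ or $\beta$.

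The main calculation is to compute, for instance, $\pi(\mathscr{B}^{+}(u,m_0+4s))|\theta^*_{2s}\rangle$ explicitly. Since $\mathscr{B}^{+}$ is at most quadratic in $\tA,\tA^*$ and $|\theta^*_{2s}\rangle$ is the top eigenvector of $\pi(\tA^*)$, the action produces a linear combination of $|\theta^*_{2s}\rangle$, $|\theta^*_{2s-1}\rangle$ (and possibly $|\theta^*_{2s-2}\rangle$) with coefficients built from the eigenvalue $\theta^*_{2s}$ in (\ref{st}) and the matrix entries $A_{N,N\pm 1}$ of (\ref{amm1})--(\ref{amm3}). I would substitute the shift $m=m_0+4s$ into the coefficient that multiplies $|\theta^*_{2s}\rangle$ (the ``would-be'' highest component, corresponding to the raising direction blocked by $A_{2s+1,2s}=0$) and verify that setting $(q^2-q^{-2})\chi^{-1}\beta\,\tb^*q^{-m_0}=1$ makes that coefficient factor vanish. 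The other three identities follow by the same computation under the substitutions that relate $\mathscr{B}^{\pm},\mathscr{C}^{\pm}$ and the $\pm$ eigenbases: $(\tb,\tc,\zeta)\leftrightarrow(\tb^*,\tc^*,\zeta^{-1})$, $\beta\to\alpha$, and the swap between a vector identity and its dual covector identity, exactly as Lemma \ref{lem:g1} passes to Lemma \ref{lem:g2}.

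Because the statement asserts that it merely \emph{extends} \cite[Lemma 3.4]{BP19}, I expect the bulk of the verification to be identical in structure to the proof already given there for the bottom reference states, with the only genuine differences being (i) the replacement of the index $m_0$ by $m_0+4s$ in the spectral-shift argument of the dynamical operators, and (ii) the use of the upper boundary condition $A^*_{2s+1,2s}=A_{2s+1,2s}=0$ in place of the lower one $A^*_{-1,0}=A_{-1,0}=0$. This is why the authors write ``thus we skip the proof.'' The main obstacle, and the only step requiring care, is bookkeeping the shift $m_0\to m_0+4s$: one must check that under this shift the $q$-power prefactors appearing in $\mathscr{B}^{+}(u,m)$ combine with the eigenvalue $\theta^*_{2s}=\tb^*q^{4s}+\tc^*q^{-4s}$ so that the tuning condition comes out as $q^{-m_0}$ rather than $q^{-m_0\pm 4s}$; tracking these $q$-exponents correctly is where a sign or power error would most easily enter, but it is otherwise a routine propagation of the Appendix A formulas.
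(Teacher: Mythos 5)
Your proposal is correct and is essentially the proof the paper has in mind: the paper itself skips the argument, deferring to the computation behind \cite[Lemma 3.4]{BP19}, namely writing $\mathscr{B}^{\pm},\mathscr{C}^{\pm}$ as (at most) quadratic polynomials in $\pi(\tA),\pi(\tA^*)$, acting on the extremal (co)vectors via the tridiagonal relations (\ref{tridAstar})--(\ref{tridAstar2dual}), invoking the truncation $A^*_{2s+1,2s}=A_{2s+1,2s}=0$, and checking that the stated tuning of $\alpha$ or $\beta$ annihilates what remains, with the $q^{\pm 4s}$ coming from $\theta_{2s},\theta^*_{2s}$ absorbing the shift $m_0\to m_0+4s$ exactly as you flag. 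One precision: once the raising direction is blocked, there generically remain \emph{two} components --- the diagonal one along $|\theta^*_{2s}\rangle$ and the lowering one along $|\theta^*_{2s-1}\rangle$ --- and the single tuning condition must be verified to kill both; your wording conflates the diagonal coefficient with the blocked raising one, but the explicit component-by-component computation you outline would cover this.
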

By straightforward calculations, it follows (see \cite[Lemma 3.3]{BP19} for the proof of (\ref{actionADvac})):
\begin{lem}\label{lem:diagonalaction} Let $\alpha,\beta$ be fixed according to Lemmas \ref{lem:g1}, \ref{lem:g2}. 
Then, the dynamical operators act as:
\ben
\pi(\mathscr{A}^\pm(u,{m_0}))|\Omega^\pm\rangle &=&\Lambda_1^\pm(u) |\Omega^\pm\rangle \qquad \mbox{and}\qquad 
\pi(\mathscr{D}^\pm(u,m_0))|\Omega^-\rangle =\Lambda_2^\pm(u) |\Omega^\pm\rangle \ ,\label{actionADvac}\\
\langle\Omega^\pm|\pi(\mathscr{A}^\pm(v,{m_0}))&=&\langle\Omega^\pm|\Lambda_1^\pm(v)\qquad \mbox{and}\qquad 
\langle\Omega^\pm|\pi(\mathscr{D}^\pm(v,m_0)) =\langle\Omega^\pm|\Lambda_2^\pm(v)  \ ,\label{actionADvacdual}
\een
where the eigenvalues  take the factorized form:
\ben\label{Lap}
&&\Lambda_1^\epsilon(u)=\frac{q^{-2s-1}}{u^{\epsilon}}
\left( q^{2 s+1} u\zeta^{-1}-u^{-1}\zeta\right)\left( q^{2 s+1}u\zeta-u^{-1}\zeta^{-1}\right) 
\nonumber
\\ && \qquad\qquad\qquad
\times
\left(u \tc^* q^{-2s}  +u^{-1} \tb q^{2s}\right)
\left(
u \left(\frac{\tc}{\tc^*}\right)^{\frac{1-\epsilon}{2}} +
u^{-1} \left(\frac{\tb^*}{\tb}\right)^{\frac{1+\epsilon}{2}}
\right)\,,\nonumber
\een
\ben
&& \Lambda_2^\epsilon(u)=
\frac{(u^2-u^{-2})q^{-2s-1}}{u^\epsilon (qu^2-q^{-1}u^{-2})}
\left( q^{2 s-1} u^{-1}\zeta-u \zeta^{-1}\right)
\left(q^{2 s-1} u^{-1} \zeta^{-1}-u \zeta\right) \nonumber
\\ && \qquad\qquad\qquad
\times
\left(q^2 u \tb q^{2s}+u^{-1} \tc^* q^{-2s}\right)
\left(q^2 u \left(\frac{\tb^*}{\tb}\right)^{\frac{1+\epsilon}{2}}+u^{-1} 
\left(\frac{\tc}{\tc^*}\right)^{\frac{1-\epsilon}{2}}\right)\,.\nonumber
\een 
\end{lem}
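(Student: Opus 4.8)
The plan is to prove the lemma by direct evaluation, using that each reference state is simultaneously an eigenvector of one generator of the Leonard pair and a lowest vector for the tridiagonal action of the other. Recall from Definition~\ref{defrefst} that $|\Omega^+\rangle=|\theta^*_0\rangle$ and $|\Omega^-\rangle=|\theta_0\rangle$, and from \cite[Appendix~A]{BP19} that $\pi(\mathscr{A}^\pm(u,m_0))$ and $\pi(\mathscr{D}^\pm(u,m_0))$ are polynomials of degree at most two in $\pi(\tA),\pi(\tA^*)$ whose coefficients are explicit functions of $u,\alpha,\beta$ and the structure constants. When applied to $|\Omega^+\rangle$, the generator $\pi(\tA^*)$ acts diagonally with eigenvalue $\theta^*_0=\tb^*+\tc^*$, whereas $\pi(\tA)$ acts through the three-term recurrence (\ref{tridAstar2}); hence the image a priori lies in the span of $|\theta^*_0\rangle,|\theta^*_1\rangle,|\theta^*_2\rangle$, and the assertion amounts to the vanishing of the $|\theta^*_1\rangle$- and $|\theta^*_2\rangle$-components.

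First I would treat (\ref{actionADvac}) for $\epsilon=+$, which is \cite[Lemma~3.3]{BP19}: substituting $\pi(\tA^*)|\theta^*_0\rangle=\theta^*_0|\theta^*_0\rangle$ and iterating (\ref{tridAstar2}), I collect the coefficients of $|\theta^*_1\rangle$ and $|\theta^*_2\rangle$; their cancellation is governed by the value of $\alpha$ fixed in Lemma~\ref{lem:g1} (the same value yielding $\pi(\mathscr{C}^+(u,m_0))|\theta^*_0\rangle=0$), leaving precisely the coefficient of $|\theta^*_0\rangle$ as the eigenvalue. The case $\epsilon=-$ is identical after exchanging the roles of $\pi(\tA)$ and $\pi(\tA^*)$, now using $|\Omega^-\rangle=|\theta_0\rangle$, the recurrence (\ref{tridAstar}), the annihilation $\pi(\mathscr{C}^-(u,m_0))|\theta_0\rangle=0$, and the coefficients $A_{0,0},A_{1,0}$ obtained from (\ref{amm1})--(\ref{amm3}) via $(\tb,\tc,\zeta,M)\leftrightarrow(\tb^*,\tc^*,\zeta^{-1},N)$. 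The dual identities (\ref{actionADvacdual}) follow symmetrically, reading $\langle\Omega^\pm|$ as left eigenvectors obeying the dual tridiagonal relations (\ref{tridAstardual}),(\ref{tridAstar2dual}); the off-diagonal covector components are removed here by the value of $\beta$ from Lemma~\ref{lem:g2}, i.e.\ by $\langle\theta^*_0|\pi(\mathscr{B}^+(u,m_0-2))=0$ and its $\epsilon=-$ partner.

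The main obstacle is the closing step: identifying each surviving diagonal coefficient with the fully factorized eigenvalue $\Lambda_1^\epsilon(u)$ or $\Lambda_2^\epsilon(u)$ of (\ref{Lap}). After inserting the parametrization (\ref{st}) of $\theta^*_0$ (resp.\ $\theta_0$), the structure constants (\ref{scpar1})--(\ref{scpar3}) entering the operator coefficients, and the explicit $A^*_{0,0},A^*_{1,0}$ from (\ref{amm1})--(\ref{amm3}), the diagonal coefficient emerges as a Laurent polynomial in $u$. I would recognize its factorized form by matching it against the product of four linear factors in $u,u^{-1}$ in (\ref{Lap}): it suffices to check that the two sides have the same zeros in the variable $u^2$ and the same leading coefficient. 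For $\Lambda_2^\epsilon$ one must also reproduce the rational prefactor $(u^2-u^{-2})/(qu^2-q^{-1}u^{-2})$, which originates from the dynamical dependence on $m$. Since the off-diagonal cancellations are dictated by Lemmas~\ref{lem:g1} and~\ref{lem:g2} and the diagonalization template is already established in \cite{BP19}, this factorization is the only genuinely new computation, consistent with the statement holding ``by straightforward calculations.''
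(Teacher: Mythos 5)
Your proposal is correct and takes essentially the same route as the paper: the paper's own proof consists of citing \cite[Lemma 3.3]{BP19} for (\ref{actionADvac}) and asserting that the dual relations (\ref{actionADvacdual}) follow by analogous straightforward calculations, which is exactly the computation you outline. Writing $\pi(\mathscr{A}^\pm(u,m_0))$, $\pi(\mathscr{D}^\pm(u,m_0))$ as degree-two polynomials in the Leonard pair, acting on the reference (co)vectors via the diagonal and (dual) tridiagonal relations, cancelling the off-diagonal components through the parameter choices of Lemmas \ref{lem:g1} and \ref{lem:g2}, and then matching the surviving diagonal coefficient against the factorized forms $\Lambda_1^\epsilon(u)$, $\Lambda_2^\epsilon(u)$ is precisely the content of the proof in \cite{BP19} and of its dual analogue invoked here.
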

Different types of  Bethe states may be considered, built from successive actions of the dynamical operators $\mathscr{B}^\pm(u,m)$ (resp. $\mathscr{C}^\pm(u,m)$) on each reference state
$|\Omega^\pm\rangle$ (resp. its dual $\langle\Omega^\pm|$). Consider the strings of dynamical operators 
\ben
B^{\epsilon}(\bar u,m,M)&=&\mathscr{B}^{\epsilon}(u_1,m+2(M-1))\cdots \mathscr{B}^{\epsilon}(u_M,m)\, \label{SB}
\een
and
\ben
C^{\epsilon}(\bar v,m,N)&=&\mathscr{C}^{\epsilon}(v_1,m+2)\cdots \mathscr{C}^{\epsilon}(v_N,m+2N)\, ,\label{stringC1}
\een
where we denote the set of variables $ \bar  u = \{u_1,u_2,\dots,u_M\}$, $ \bar v= \{v_1,v_2,\dots,v_N\}$. Taking into account the parameters $\alpha$ and $\beta$ given in (\ref{ab}) and (\ref{abdual}), define the following vectors and dual vectors,
\ben\label{PsiA}
|\Psi_{-}^M( \bar u,m_0)\rangle = \pi(B^{-}( \bar  u,m_0,M))|\Omega^{-}\rangle\,\quad
\mbox{for} \quad (q^2-q^{-2})\chi^{-1}\alpha \tb q^{-m_0}=-1 \quad \mbox{and}\quad \beta=0\ ,
\een
\ben\label{PsiAp}
|\Psi_{+}^M( \bar  w,m_0)\rangle = \pi(B^{+}( \bar  w,m_0,M))|\Omega^{+}\rangle\,\quad
\mbox{for} \quad (q^2-q^{-2})\chi^{-1}\alpha \tc^*q^{m_0}=1 \quad \mbox{and}\quad \beta=0\ ,
\een
\ben\label{PsidualAstarm}
\langle \Psi_{-}^N( \bar  v,m_0)| = \langle \Omega^{-}| \pi(C^{-}( \bar  v,m_0,N))\,
\quad
\mbox{for} \quad (q^2-q^{-2})\chi^{-1}\beta \tc q^{m_0-2}=-1 \quad \mbox{and}\quad \alpha=0\ ,
\een
\ben\label{PsidualAstar}
\langle \Psi_{+}^N( \bar  y,m_0)| = \langle \Omega^{+}| \pi(C^{+}( \bar  y,m_0,N))\,
\quad
\mbox{for} \quad (q^2-q^{-2})\chi^{-1}\beta \tb^*q^{-m_0+2}=1 \quad \mbox{and}\quad \alpha=0\ .
\een

As usual, if $ \bar u$ (or $ \bar w, \bar v, \bar y$) is a set of variables satisfying certain Bethe ansatz equations, the Bethe states (\ref{PsiA}), (\ref{PsiAp}), (\ref{PsidualAstarm}) and (\ref{PsidualAstar}), are called `on-shell'. On the other hand, 
if the set of variables $ \bar  u$ is arbitrary, the Bethe states are called `off-shell'.

\subsection{Eigenbases of homogenous type for the Leonard pair 
}
The eigenvectors and dual eigenvectors for the Leonard pair $\pi({\textsf A})$, $\pi({\textsf A}^*)$, are now constructed in terms of Bethe states and dual Bethe states associated with Bethe equations of homogeneous type. 
Let us define the set of functions:
\ben
E_{\pm}^M(u_i, \bar u_i)=-\frac{b(u_i^2)}{b(qu_i^2)}\prod_{j=1,j\neq i}^Mf(u_i,u_j)\Lambda_1^\pm(u_i)+\prod_{j=1,j\neq i}^Mh(u_i,u_j)\Lambda_2^\pm(u_i)\,,\label{Bfunc}
\een
for $i=1,\dots,M$. The set of equations $E_{\pm}^M(u_i, \bar  u_i)=0$ for $i=1,\dots,M$
are called the Bethe ansatz equations of {\it homogeneous} type associated with the set of Bethe roots ${\bar u}$. 
Note that the set of equations $E_{\pm}^M(u_i, \bar  u_i)=0$
for $i=1,\dots,M$
contain \textit{trivial} solutions where $u_i^2=u_i^{-2}$ or $u_i=0$ for $i=1,\dots,M$, recall the expression of $\Lambda_2^\epsilon(u)$ in Lemma \ref{lem:diagonalaction}. These solutions must be discarded since they lead to identically null or ill defined
Bethe states. To see that, we refer the reader to \cite[Corollary 3.1]{BP19}
where the expansion of a Bethe state in the Poincar\'e-Birkhoff-Witt basis of the Askey-Wilson algebra is given. Also,
after extracting the denominator in $E_{\pm}^M(u_i, \bar  u_i)=0$, we may find solutions for
which $U_i=U_j$ for $i\neq j$ where the symmetrized Bethe root $U_i=(qu_i^2+q^{-1}u_i^{-2})/(q+q^{-1})$ is introduced. The solutions
with coincident symmetrized Bethe roots such that $U_i=U_j$ for $i\neq j$ are not admissible, since they may lead to Bethe states which are not eigenstates of $\pi(\tA)$ or $\pi(\tA^*)$. For additional discussion on this subject we refer the reader to
\cite[Subsection 3.5]{BP19}. Below, the solutions $U_i\neq U_j$ are called \textit{admissible}. For a discussion of coincident Bethe roots for other integrable systems
see \cite[Subsection 2.3]{S22}. Based on these observations and supported by numerical analysis, we formulate
the following hypothesis.
\vspace{1mm}

\begin{hyp}\label{hyp1}
For each integer $M$ (resp. $N$) with $0\leq M,N\leq 2s$,  there exists at least one set of non trivial admissible Bethe roots $S^{M(h)}_-=\{u_1,...,u_{M}\}$ (resp. $S_+^{*N(h)}=\{w_1,...,w_{N}\}$) such that
\beqa
 E_{-}^M(u_i, \bar u_i)=0 \quad \mbox{for} \quad  \bar u = S^{M(h)}_- \ ,\qquad
(\mbox{resp.}\quad  E_{+}^N(w_i, \bar  w_i)=0 \quad \mbox{for} \quad \bar  w = S^{*N(h)}_+)\ .
\eeqa
%
%
\end{hyp}
\begin{lem}\label{lem:tM}  Assume Hypothesis \ref{hyp1}. The following relations hold: 
\beqa
|\theta_M\rangle &=& {\cal N}_M(\bar u) |\Psi_{-}^M(\bar u,m_0)\rangle \qquad \mbox{for}\quad  \bar u = S^{M(h)}_-\ ,\label{norm1}\\
|\theta_N^{*}\rangle &=& {\cal N}_N^*(\bar w) |\Psi_{+}^N(\bar w,m_0)\rangle \qquad \mbox{for}  \quad  \bar w = S^{*N(h)}_+\ 
\label{norm2}
\eeqa
with
\beqa
{\cal N}_M(\bar u)= \prod_{k=1}^M\left(q u_k b(u_k^2)A^*_{k,k-1}\right)^{-1} \ ,\qquad
{\cal N}_N^*(\bar w)= \prod_{k=1}^N\left(-q^{-1} w_k^{-1} b(w_k^2)A_{k,k-1}\right)^{-1}   \ ,\label{Ncoeff}
\eeqa
and ${\cal N}_0(.)={\cal N}_0^*(.)=1$.
\end{lem}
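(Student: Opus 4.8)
The plan is to prove (\ref{norm1}) first and then deduce (\ref{norm2}) from the $\tA\leftrightarrow\tA^*$ duality of the construction recorded after (\ref{amm3}), under which $(\tb,\tc,\zeta,M,\mathscr{B}^-,A^*)$ is exchanged with $(\tb^*,\tc^*,\zeta^{-1},N,\mathscr{B}^+,A)$. For (\ref{norm1}) the strategy is to show that, at a non-trivial admissible solution $\bar u=S^{M(h)}_-$ of the homogeneous Bethe equations, the vector $|\Psi_-^M(\bar u,m_0)\rangle$ is a nonzero eigenvector of $\pi(\tA)$ with eigenvalue $\theta_M$, and then to compute its coefficient on $|\theta_M\rangle$.

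First I would realize $\pi(\tA)$ from the $u$-expansion of the transfer-matrix combination $t^-(u)=\pi(\mathscr{A}^-(u,m_0))+\pi(\mathscr{D}^-(u,m_0))$, whose reference-state eigenvalues $\Lambda_1^-,\Lambda_2^-$ are given in Lemma \ref{lem:diagonalaction}. Commuting $t^-(u)$ through the string $B^-(\bar u,m_0,M)$ by means of the exchange relations (\ref{comBdBd})--(\ref{comDdCd}) and the vanishing $\pi(\mathscr{C}^-(u,m_0))|\theta_0\rangle=0$ of Lemma \ref{lem:g1} produces, in the standard algebraic Bethe ansatz pattern already used in \cite{BP19}, a wanted term proportional to $|\Psi_-^M\rangle$ together with unwanted terms, each carrying a factor $E_-^M(u_i,\bar u_i)$ of (\ref{Bfunc}). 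The homogeneous Bethe equations $E_-^M(u_i,\bar u_i)=0$ for $\bar u=S^{M(h)}_-$ (Hypothesis \ref{hyp1}) annihilate the unwanted terms, so $|\Psi_-^M(\bar u,m_0)\rangle$ is an eigenvector of $t^-(u)$ for all $u$, hence of $\pi(\tA)$; matching the relevant coefficient of its eigenvalue against the parametrization (\ref{st}) identifies this eigenvalue as $\theta_M$, the dependence on the number $M$ of roots fixing the index. Admissibility $U_i\neq U_j$ and the exclusion of the trivial roots guarantee $|\Psi_-^M\rangle\neq 0$ via the Poincar\'e-Birkhoff-Witt expansion of \cite[Corollary 3.1]{BP19}. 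Since $\pi(\tA)$ is multiplicity-free, its $\theta_M$-eigenspace is one-dimensional and therefore $|\Psi_-^M(\bar u,m_0)\rangle=c_M|\theta_M\rangle$ for a nonzero scalar $c_M$.

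It then remains to evaluate $c_M$, which I would do by isolating the component of maximal $\pi(\tA)$-index in the string. Using the explicit degree-two expression of $\mathscr{B}^-(u,m)$ from \cite[Appendix A]{BP19} together with the tridiagonal action (\ref{tridAstar}), one checks that $\pi(\mathscr{B}^-(u,m))$ raises the index by exactly one, sending the $|\theta_{k-1}\rangle$-line to the $|\theta_k\rangle$-line with coefficient $q\,u\,b(u^2)\,A^*_{k,k-1}$, the off-diagonal part of $\pi(\tA^*)$ providing the factor $A^*_{k,k-1}$. Because each factor raises by exactly one, the only way the $M$-fold product reaches $|\theta_M\rangle$ from $|\theta_0\rangle=|\Omega^-\rangle$ is to raise at every step, so the coefficient of $|\theta_M\rangle$ is $\prod_{k=1}^M q\,u_k\,b(u_k^2)\,A^*_{k,k-1}={\cal N}_M(\bar u)^{-1}$, the product being independent of how the roots are matched to the levels. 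On-shell every lower component vanishes, whence $c_M={\cal N}_M(\bar u)^{-1}$ and (\ref{norm1}) follows, with the base case ${\cal N}_0=1$ immediate from Definition \ref{defrefst}. The same computation applied to the $\epsilon=+$ operators yields the raising coefficient $-q^{-1}w_k^{-1}b(w_k^2)A_{k,k-1}$ and hence (\ref{norm2}).

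The main obstacle is the normalization step: one must verify from the explicit form of $\mathscr{B}^-(u,m)$ in \cite[Appendix A]{BP19} that its index-raising part is strictly of raise-by-one type, with no residual raise-by-two piece surviving from the $(\tA^*)^2$ contribution to the degree-two operator, and that the surviving coefficient collapses to exactly $q\,u\,b(u^2)\,A^*_{k,k-1}$ at level $k$ once the dynamical shifts $m_0+2(k-1)$ along the string are taken into account. The eigenvector property, by contrast, is a routine transcription of the analysis in \cite{BP19}.
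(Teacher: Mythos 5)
Your proposal is correct and takes essentially the same approach as the paper: the on-shell eigenvector property you re-derive by commuting the $\mathscr{A}^-,\mathscr{D}^-$ combination through the string is exactly what the paper imports by citing \cite[Proposition 3.1]{BP19}, and the remaining steps --- multiplicity-freeness of $\pi(\tA)$ giving proportionality, then extraction of the coefficient of $|\theta_M\rangle$ from the polynomial structure of $B^-(\bar u,m_0,M)$ --- coincide with the paper's argument. Your flagged ``main obstacle'' (absence of a raise-by-two piece and collapse of the raising coefficient to $q\,u_k\,b(u_k^2)A^*_{k,k-1}$ despite the dynamical shifts) is precisely the computation the paper compresses into ``one extracts the coefficient of $|\theta_M\rangle$'', and it goes through because each $\mathscr{B}^{-}$ is a combination of $\mathcal{I},\tA,\tA^*,\tA\tA^*,\tA^*\tA$ only, so each monomial carries at most one factor of $\tA^*$.
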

\begin{proof} Consider (\ref{norm1}). By \cite[Proposition 3.1]{BP19}, it is known that:
\beqa
\pi(\tA) |\Psi_{-}^M( \bar u,m_0)\rangle &=&  \theta_M  |\Psi_{-}^M( \bar u,m_0)\rangle \qquad \mbox{for} \quad  \bar u = S^{M(h)}_- \ .\label{Apsi}
\eeqa
By definition of a Leonard pair, the spectrum of $\pi(\tA)$ is non-degenerate with (\ref{tridAstar}). So, if there exists a solution of the Bethe equations associated with the eigenvalue $\theta_M$, it must be such that $|\theta_M\rangle $ is proportional to $|\Psi_{-}^M( \bar u,m_0)\rangle$. Let ${\cal N}_M(\bar u)$ denote the normalization factor in the r.h.s. of (\ref{norm1}). To fix it, observe that $B^{-}(\bar u,m_0,M)$ is a polynomial in $\tA,\tA^*,\tA\tA^*,\tA^*\tA$ \cite[Appendix A]{BP19}. Using (\ref{tridAstar}), one extracts the coefficient of $|\theta_M\rangle$ from $|\Psi_{-}^M( \bar u,m_0)\rangle$ which, by definition, is the inverse of ${\cal N}_M(\bar u)$. The proof of (\ref{norm2}) is done along the same line, starting from \cite[Proposition 3.2]{BP19} and using (\ref{tridAstar2}). 
\end{proof}

The proof of the following lemma is analog to Lemma \ref{lem:tM}, so we skip the details.
\begin{lem}\label{lem:tMdual}  Assume Hypothesis \ref{hyp1}.  The following relations hold: 
\beqa
\langle \theta_M|&=&\tilde{\cal N}_M(\bar v)\langle \Psi_{-}^M(\bar v,m_0)| \qquad \mbox{for}  \quad  \bar v = S^{M(h)}_-\ ,\label{normdual1}\\
\langle \theta_N^{*}|&=&\tilde{\cal N}_N^{*}(\bar y)\langle \Psi_{+}^N(\bar y,m_0)| \qquad \mbox{for}  \quad  \bar y = S^{*N(h)}_+ \label{normdual2}
\eeqa
with 
\beqa
\tilde{\cal N}_M(\bar v)&=&\prod_{k=1}^M\left(q^{-1} v_k b(v_k^2){\tilde A}_{k,k-1}^*\right)^{-1}    \ ,\qquad
\tilde{\cal N}_N^{*}(\bar y)=\prod_{k=1}^N\left(-q y_k^{-1} b(y_k^2){\tilde A}_{k,k-1}\right)^{-1}    \ \label{Ncoeffdual}
\eeqa
and $\tilde{\cal N}_0(.)=\tilde{\cal N}_0^*(.)=1$.
\end{lem}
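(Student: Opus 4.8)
The plan is to mirror the argument of Lemma \ref{lem:tM}, transferring it to the dual vector space via the structural results already assembled in the excerpt. The key observation is that the eigenvalue equations $\langle \theta_M|\pi(\tA) = \langle \theta_M|\theta_M$ and $\langle \theta_N^*|\pi(\tA^*) = \langle \theta_N^*|\theta_N^*$ in (\ref{tridAstardual}), (\ref{tridAstar2dual}) are the exact duals of (\ref{tridAstar}), (\ref{tridAstar2}). By the dual analogues of \cite[Propositions 3.1, 3.2]{BP19}, which apply by virtue of Lemma \ref{lem:g2} (providing the dual reference states), the off-shell dual Bethe covector $\langle \Psi_{-}^M(\bar v,m_0)|$ becomes an on-shell eigenvector of $\pi(\tA)$ with eigenvalue $\theta_M$ precisely when $\bar v = S^{M(h)}_-$ solves the homogeneous Bethe equations $E_{-}^M(v_i,\bar v_i)=0$; likewise $\langle \Psi_{+}^N(\bar y,m_0)|$ is an eigencovector of $\pi(\tA^*)$ with eigenvalue $\theta_N^*$ when $\bar y = S^{*N(h)}_+$.

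First I would invoke the non-degeneracy of the spectrum of $\pi(\tA)$ (resp. $\pi(\tA^*)$), guaranteed by the defining property of a Leonard pair, to conclude that each eigenspace is one-dimensional. Hence $\langle \theta_M|$ must be proportional to the on-shell covector $\langle \Psi_{-}^M(\bar v,m_0)|$, and $\langle \theta_N^*|$ proportional to $\langle \Psi_{+}^N(\bar y,m_0)|$. This establishes (\ref{normdual1}) and (\ref{normdual2}) up to the scalar factors $\tilde{\cal N}_M(\bar v)$, $\tilde{\cal N}_N^*(\bar y)$, which remain to be pinned down.

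To fix the normalization I would expand the string $C^{-}(\bar v,m_0,M)$ (resp. $C^{+}(\bar y,m_0,N)$) as a polynomial in $\tA,\tA^*,\tA\tA^*,\tA^*\tA$ using \cite[Appendix A]{BP19}, then act on the dual side and read off the coefficient of $\langle \theta_M|$ in $\langle \Psi_{-}^M(\bar v,m_0)|$ by means of the tridiagonal action (\ref{tridAstardual}). In the primal case of Lemma \ref{lem:tM} this coefficient was built from the subdiagonal entries $A^*_{k,k-1}$; in the dual case the relevant entries are instead the $\tilde{A}^*_{k,k-1}$ of (\ref{tridAstardual}) (resp. $\tilde{A}_{k,k-1}$), which accounts for the replacement of $A^*_{k,k-1}$ by $\tilde{A}^*_{k,k-1}$ and the sign/power shifts $q \leftrightarrow q^{-1}$ visible when comparing (\ref{Ncoeffdual}) with (\ref{Ncoeff}). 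Since the computation is entirely parallel to Lemma \ref{lem:tM}, the text legitimately defers to that proof.

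The main obstacle is not the proportionality, which is immediate from spectral non-degeneracy, but the precise bookkeeping in the normalization factors: one must verify that the leading coefficient extracted from the dual string $C^{\epsilon}$ genuinely reproduces the product $\prod_{k=1}^M q^{-1}v_k\, b(v_k^2)\,\tilde{A}^*_{k,k-1}$, including the correct power of $q$ and the correct use of the dual tridiagonal coefficients rather than the primal ones. Tracking how each factor $\mathscr{C}^{\epsilon}(v_k,\cdot)$ contributes its subdiagonal weight, and confirming that the $q^{-1}v_k$ prefactor arises from the explicit form of the dynamical operators in \cite[Appendix A]{BP19}, is the delicate part; everything else follows the template of Lemma \ref{lem:tM}, which is why the statement declares the proof analogous and omits the details.
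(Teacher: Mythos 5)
Your proposal is correct and follows essentially the same route as the paper: the paper's proof of Lemma \ref{lem:tMdual} is literally declared ``analog to Lemma \ref{lem:tM}'' with details skipped, and your argument spells out exactly that analogy — on-shell dual covectors as eigencovectors (using the left-action relations of Appendix \ref{apA} and Lemma \ref{lem:g2}), proportionality from spectral non-degeneracy of the Leonard pair, and normalization fixed by extracting the leading coefficient of the string $C^{\epsilon}$ via the dual tridiagonal action (\ref{tridAstardual}), (\ref{tridAstar2dual}), which is precisely where the dual coefficients $\tilde{A}^*_{k,k-1}$, $\tilde{A}_{k,k-1}$ enter in place of $A^*_{k,k-1}$, $A_{k,k-1}$.
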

For the homogeneous case, it should be stressed that the construction of the Bethe states and dual Bethe states lead to the same set of Bethe equations (\ref{Bfunc}).

\subsection{Eigenbases of inhomogenous type for the Leonard pair 
}
The eigenvectors and dual eigenvectors for the Leonard pair $\pi({\textsf A})$, $\pi({\textsf A}^*)$, can be alternatively constructed in terms of Bethe states and dual Bethe states associated with Bethe equations of inhomogeneous type. 
However, contrary to the homogeneous case, for the inhomogeneous case two different sets of Bethe equations characterizing respectively to Bethe states and dual Bethe states are obtained.
\vspace{1mm}

A first set of Bethe equations of inhomogeneous type is now introduced. Let us define\footnote{We observe
that there is a typo in Eq.(3.78) of \cite{BP19}: the second and third terms in Eq.(3.78) have wrong sign.}:
\ben
E_{\pm}(u_i,\bar u_i)&=& \frac{b(u_i^2)}{b(qu_i^2)}u_i^{\pm 1}\prod_{j=1,j\neq i}^{2s}f(u_i,u_j)\Lambda_1^{\pm 1}(u_i)
-
(q^2u_i^3)^{\mp 1}\prod_{j=1,j\neq i}^{2s}h(u_i,u_j)\Lambda_2^{\pm 1}(u_i)
\label{BAEd}\\
&&
\  +    \nu_\pm
\frac{u_i^{\mp 2}b(u_i^2)}{b(q)}\frac{\prod_{k=0}^{2s}b(q^{1/2+k-s}\zeta u_i)b(q^{1/2+k-s}\zeta^{-1}u_i)}{\prod_{j=1,j\neq i}^{2s}b(u_iu_j^{-1}) b(qu_iu_j)}\, =0\ ,\nonumber
\, \label{Bfuncinhom}
\een
where 
\ben
\nu_+ = q^{-1-4s}\tc^*\,,\quad
\nu_- = q^{1+4s}\tb\,,
\label{deltad}
\een
for $i=1,\dots,2s$. The set of equations  $E_{\pm}(u_i,\bar u_i)=0$ for $i=1,\dots,2s$
are the Bethe ansatz equations of {\it inhomogeneous} type for the set of Bethe roots $\bar u$ with $M=2s$. 
\vspace{1mm}

Recall the structure of the spectra (\ref{st}) for a Leonard pair.  
\begin{hyp}\label{hyp2}
For each integer $M$ (or $N$) with $0\leq M,N\leq 2s$,  there exists at least one set of non trivial admissible Bethe roots $S_+^{M(i)}=\{u_1,...,u_{2s}\}$ (resp. $S_-^{*N(i)}=\{w_1,...,w_{2s}\}$) such that 
\beqa
 E_{+}(u_j,\bar u_j)=0 \quad \mbox{for} \quad \bar u = S^{M(i)}_+ 
\ \qquad
(\mbox{resp.}\quad  E_{-}(w_j,\bar w_j)=0 
\quad \mbox{for} \quad \bar w = S^{*N(i)}_-)
\ ,
\eeqa
and associated with the following equality 
\beqa
\theta_M&=&  q^{-4s} \Big( \tc^* (\zeta^2+\zeta^{-2})[2s]_q +
q^{2s}(\tb q^{2s}+\tc q^{-2s})
 -q\tc^*\sum_{j=1}^{2s} (qu_j^2+q^{-1}u_j^{-2})\Big) \quad \mbox{for} \quad \bar u = S^{M(i)}_+\ \label{eigd1}\\
 \qquad \quad  \mbox{(resp.} \ \  \theta_N^{*}&=& q^{4s}\Big( \tb  (\zeta^2+\zeta^{-2})[2s]_q +
q^{-2s}(\tb^* q^{2s}+\tc^* q^{-2s})
-q^{-1}\tb\sum_{j=1}^{2s} (qw_j^2+q^{-1}w_j^{-2}) \Big) \ \quad \label{eigd2}\\ \nonumber&\mbox{for}& \quad \bar w = S^{*N(i)}_-\ )\ . 
\eeqa
\end{hyp}

The equality (\ref{eigd1}) is proven in the next lemma. The equality (\ref{eigd2}) is proven along the same line. For numerical examples of (\ref{eigd1}), see \cite[eq. (4.6) and Table 1]{BP19}. \vspace{1mm}

\begin{lem}\label{lem:tMi} Assume Hypothesis \ref{hyp2}. The following relations hold: 
\beqa
|\theta_M\rangle &=& {\cal N}^{(i)}_M(\bar u') |\Psi_{+}^{2s}(\bar u',m_0)\rangle \qquad \mbox{for}\quad  {\bar u'}= S_+^{M(i)}\ ,\label{normi1}\\
|\theta_N^{*}\rangle &=& {\cal N}^{*(i)}_N(\bar w') |\Psi_{-}^{2s}(\bar w',m_0)\rangle \qquad \mbox{for}  \quad  {\bar w'}=  S^{*N(i)}_-\ 
\label{normi2}
\eeqa
with
\beqa
{\cal N}^{(i)}_M(\bar u')= {\cal N}_{2s}^{*}(\bar u')(P^{-1})_{2s,M} \ ,\qquad
{\cal N}^{*(i)}_N(\bar w')=  {\cal N}_{2s}(\bar w')P_{2s,N} \ .\label{Ncoeffi}
\eeqa
\end{lem}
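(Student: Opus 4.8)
The plan is to establish the two relations (\ref{normi1}) and (\ref{normi2}) by connecting the inhomogeneous Bethe states to the already-understood homogeneous eigenbases of Lemmas \ref{lem:tM} and \ref{lem:tMdual}, using the extremal vacuum-annihilation properties collected in the lemma preceding Lemma \ref{lem:diagonalaction}. The key observation is that the inhomogeneous Bethe state $|\Psi_{+}^{2s}(\bar u',m_0)\rangle$ is built from a \emph{full string} of $2s$ operators $\mathscr{B}^{+}$ acting on $|\Omega^{+}\rangle = |\theta^*_0\rangle$, whereas the homogeneous construction of (\ref{normi2}) uses $\mathscr{B}^{-}$ on $|\Omega^{-}\rangle=|\theta_0\rangle$. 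First I would argue that, under Hypothesis \ref{hyp2}, the state $|\Psi_{+}^{2s}(\bar u',m_0)\rangle$ is an eigenvector of $\pi(\tA)$ with eigenvalue $\theta_M$: since $B^{+}(\bar u',m_0,2s)$ is a polynomial of degree $2s$ in $\tA,\tA^*,\tA\tA^*,\tA^*\tA$, it lies (by the Cayley-Hamilton relations (\ref{polyc})) in the span of the eigenvectors, and the inhomogeneous Bethe equations $E_{+}(u_j,\bar u_j)=0$ together with the trace/sum identity (\ref{eigd1}) select $\theta_M$ as its $\pi(\tA)$-eigenvalue. Non-degeneracy of the spectrum of $\pi(\tA)$ then forces $|\Psi_{+}^{2s}(\bar u',m_0)\rangle$ to be proportional to $|\theta_M\rangle$, which is exactly the \emph{qualitative} content of (\ref{normi1}); the entire task therefore reduces to pinning down the scalar ${\cal N}^{(i)}_M(\bar u')$.

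The main work is the \emph{normalization}, and here the natural strategy is to compute the overlap of both sides with the dual extremal covector $\langle\theta^*_{2s}|$ rather than expanding the whole string. Since $|\theta_M\rangle=\sum_N (P^{-1})_{NM}|\theta^*_N\rangle$ by (\ref{transeq}), pairing with $\langle\theta^*_{2s}|$ and using the orthogonality (\ref{orthcond}) isolates the single coefficient $(P^{-1})_{2s,M}$ up to the normalization $\xi^*_{2s}$. On the Bethe-state side I would evaluate $\langle\theta^*_{2s}|\Psi_{+}^{2s}(\bar u',m_0)\rangle$ by pushing $\langle\theta^*_{2s}|$ through the string $B^{+}(\bar u',m_0,2s)$: the covector $\langle\theta^*_{2s}|$ is annihilated on the right by $\pi(\mathscr{C}^{+})$ (the third annihilation identity in the lemma before Lemma \ref{lem:diagonalaction}), so in the exchange relations (\ref{comBdBd})--(\ref{comDdCd}) only the diagonal $\mathscr{A}^{+},\mathscr{D}^{+}$ terms survive, whose action is governed by $\Lambda_1^{+},\Lambda_2^{+}$ from Lemma \ref{lem:diagonalaction}. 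This telescopes into a product that, after imposing the Bethe equations, should collapse to the same structural constant $\prod_{k=1}^{2s}(-q^{-1}w_k^{-1}b(w_k^2)A_{k,k-1})^{-1}$-type factor appearing in ${\cal N}^{*}_{2s}$, explaining the factorized answer ${\cal N}_{2s}^{*}(\bar u')(P^{-1})_{2s,M}$ asserted in (\ref{Ncoeffi}). The relation (\ref{normi2}) is handled symmetrically, pairing with $\langle\theta_{2s}|$ and using the $\mathscr{C}^{-}$-annihilation identity, under the substitution $(\tb,\tc,\zeta,M)\leftrightarrow(\tb^*,\tc^*,\zeta^{-1},N)$.

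The hard part will be controlling the normalization computation rigorously, i.e. verifying that the telescoping of the diagonal terms through the $2s$-fold string, after substituting the inhomogeneous Bethe equations (\ref{Bfuncinhom}) and the sum rule (\ref{eigd1}), reproduces precisely the factor ${\cal N}_{2s}^{*}(\bar u')$ times the off-diagonal transition entry $(P^{-1})_{2s,M}$, rather than some other constant. The inhomogeneous term in $E_{\pm}(u_i,\bar u_i)$ (the $\nu_\pm$ contribution, absent in the homogeneous functions $E_{\pm}^M$) is what makes the two cases genuinely different, and tracking its effect through the exchange relations is the delicate point. I expect this to be the step most likely to require the detailed operator expressions from \cite[Appendix A]{BP19}; the existence and eigenvalue identification (the first paragraph) are comparatively routine given Hypothesis \ref{hyp2} and the non-degeneracy built into the definition of a Leonard pair. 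Because the structure closely parallels the homogeneous Lemma \ref{lem:tM}, it is plausible that the authors will again \emph{extract} the relevant coefficient by projecting onto the extremal eigenvector and invoking (\ref{transcal}) to rewrite it as the transition-matrix entry, thereby avoiding an explicit resummation.
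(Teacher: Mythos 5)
Your first step coincides with the paper's: the eigenvector property of $|\Psi_{+}^{2s}(\bar u',m_0)\rangle$ is obtained there by specializing \cite[Proposition 3.3]{BP19} at $(\kappa,\kappa^*)=(1,0)$, Hypothesis \ref{hyp2} identifies the eigenvalue with some $\theta_M$ of the form (\ref{st}), and multiplicity-freeness of the spectrum gives proportionality. (Your Cayley--Hamilton remark is vacuous --- every vector lies in the span of the eigenvectors --- but nothing rests on it.) The genuine gap is in the normalization step, which is the whole content of (\ref{Ncoeffi}). The computation you outline for $\langle\theta^*_{2s}|\Psi_{+}^{2s}(\bar u',m_0)\rangle$ would not go through as described: the annihilation identity $\langle\theta^*_{2s}|\pi(\mathscr{C}^{+}(u,m_0+4s))=0$ gives no handle on a string made of $\mathscr{B}^{+}$'s, since none of the exchange relations (\ref{comBdBd})--(\ref{comDdCd}) lets that covector kill the ``off-diagonal'' contributions of a $\mathscr{B}^{+}$-string. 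To exploit it you would have to realize $\langle\theta^*_{2s}|$ itself as a dual Bethe state, i.e.\ a string of $2s$ operators $\mathscr{C}^{+}$ acting on $\langle\Omega^{+}|$ (Lemma \ref{lem:tMdual}), and then the overlap becomes a full $2s$-by-$2s$ scalar product of Bethe states --- precisely the determinant-type computation the paper explicitly defers to future work in its concluding remarks. Moreover, $\Lambda_1^{+},\Lambda_2^{+}$ in Lemma \ref{lem:diagonalaction} are the eigenvalues of $\mathscr{A}^{+},\mathscr{D}^{+}$ on the reference states $|\Omega^{+}\rangle$ and $\langle\Omega^{+}|=\langle\theta^*_0|$; they do not govern any action on $\langle\theta^*_{2s}|$, so the claimed telescoping has no basis.

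What actually closes the argument --- and what your final sentence guesses at but never executes --- is much simpler and requires no Bethe equations at all: triangularity. $B^{+}(\bar u',m_0,2s)$ is a polynomial in $\tA,\tA^*,\tA\tA^*,\tA^*\tA$, and by (\ref{tridAstar2}) each factor $\mathscr{B}^{+}$ raises the index in the basis $\{|\theta^*_N\rangle\}$ by at most one; hence the $2s$-fold string applied to $|\Omega^{+}\rangle=|\theta^*_0\rangle$ reaches the top vector $|\theta^*_{2s}\rangle$ with coefficient equal to the product of the one-step raising coefficients, which is exactly ${\cal N}^{*}_{2s}(\bar u')^{-1}$ --- the same \emph{off-shell} extraction already performed in the proof of Lemma \ref{lem:tM}, valid for arbitrary $\bar u'$. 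Comparing with the expansion $|\theta_{M}\rangle = (P^{-1})_{2s,M}|\theta^*_{2s}\rangle + \cdots$ from (\ref{transeq}) then yields ${\cal N}^{(i)}_M(\bar u')= {\cal N}_{2s}^{*}(\bar u')(P^{-1})_{2s,M}$ at once. In particular, the feature you flag as ``the hard part'' --- tracking the inhomogeneous $\nu_{\pm}$ term of (\ref{BAEd}) through the exchange relations --- never has to be confronted: the inhomogeneous data enter only through the eigenvalue identification in the first step, not through the normalization.
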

\begin{proof} Consider (\ref{normi1}). Specializing \cite[Proposition 3.3]{BP19} for $(\kappa,\kappa^*)=(1,0)$, it follows:
\beqa
\pi(\tA) |\Psi_{+}^{2s}(\bar u',m_0)\rangle &=&  \theta(\bar u')  |\Psi_{+}^{2s}(\bar u',m_0)\rangle \ \quad \mbox{for}\qquad E_{+}(u'_j,\bar u'_j)=0\ , 
\eeqa
where $\theta(\bar u')$ denotes the r.h.s. of (\ref{eigd1}) with $u_j \rightarrow u'_j$. Now, by definition of a Leonard pair and our choice of parameterization of the structure constants (\ref{scpar1})-(\ref{scpar3}), the  eigenvalues of $\pi(\tA)$ are of the form (\ref{st}). So, if there exists a set of solutions $\{u'_1,...,u'_{2s}\}$ of $E_{+}(u'_j,\bar u'_j)=0$ for $j=1,...,2s$, it always exists an integer $M$ such that $\theta_M= \theta(\bar u')$. Let us denote the corresponding set by ${\bar u'}= S_+^{M(i)}$. The absence of degeneracies in the spectrum of $\pi(\tA)$ implies that $|\theta_M\rangle$ is proportional to $|\Psi_{+}^{2s}(\bar u',m_0)\rangle$. To determine the normalization coefficient ${\cal N}^{(i)}_M(\bar u')$ in (\ref{normi1}), one compares
\beqa
{\cal N}^{(i)}_M(\bar u') |\Psi_{+}^{2s}(\bar u',m_0)\rangle = {\cal N}^{(i)}_M(\bar u') \left( {\cal N}^{*}_{2s}(\bar u')^{-1}  |\theta^*_{2s}\rangle + \cdots\right) \ 
\eeqa
with the second equation in (\ref{transeq}): 
\beqa
|\theta_{M}\rangle = (P^{-1})_{2s,M}|\theta^*_{2s}\rangle + \cdots 
\eeqa
The proof of (\ref{normi2}) is done along the same line, starting from the specialization of \cite[Proposition 3.3]{BP19} for $(\kappa,\kappa^*)=(0,1)$.  
\end{proof}

We now turn to the construction of the dual eigenstates for $\bar\pi\left(\tA \right)$, $\bar\pi\left(\tA^* \right)$.  
To this end, a second set of Bethe equations of inhomogeneous type is introduced. Let us define: 
\ben
\tilde{E}_{\pm}(y_i,\bar y_i)&=& \frac{b(y_i^2)}{b(qy_i^2)}y_i^{\mp 1}\prod_{j=1,j\neq i}^{2s}f(y_i,u_j)\Lambda_1^{\pm 1}(y_i)
-
(q^2y_i^3)^{\pm 1}\prod_{j=1,j\neq i}^{2s}h(y_i,u_j)\Lambda_2^{\pm 1}(y_i)
\label{dBAEd}\\
&&
\  +    \tilde{\nu}_\pm
\frac{b(y_i^2)}{b(q)}\frac{\prod_{k=0}^{2s}b(q^{1/2+k-s}\zeta y_i)b(q^{1/2+k-s}\zeta^{-1}y_i)}{\prod_{j=1,j\neq i}^{2s}b(y_iu_j^{-1}) b(qy_iu_j)}\, =0\ ,\nonumber
\, \label{dBfuncinhom}
\een
where
\ben
\tilde\nu_+ = q^{1+4s}\tb^*\,,\quad
\tilde\nu_- = q^{-1-4s}\tc\,,
\label{ddeltad}
\een
for $i=1,\dots,2s$. The set of equations  $\tilde{E}_{\pm}(y_i,\bar y_i)=0$ for $i=1,\dots,2s$
are the `dual' Bethe ansatz equations of {\it inhomogeneous} type for the set of Bethe roots $\bar y$ with $M=2s$.

%
\begin{hyp}\label{hyp3}
For each integer $M$ (or $N$) with $0\leq M,N\leq 2s$,  there exists at least one set of non trivial admissible Bethe roots $dS_+^{M(i)}=\{y_1,...,y_{2s}\}$ (resp. $dS_-^{*N(i)}=\{v_1,...,v_{2s}\}$) such that 
\beqa
 \tilde{E}_{+}(y_j,\bar y_j)=0 \quad \mbox{for} \quad \bar y = dS^{M(i)}_+ 
\ \qquad
(\mbox{resp.}\quad  \tilde{E}_{-}(v_j,\bar v_j)=0 
\quad \mbox{for} \quad \bar v = dS^{*N(i)}_-)
\ ,
\eeqa
and associated with the following equality  
\beqa
\theta_M&=&  q^{4s} \Big( \tb^* (\zeta^2+\zeta^{-2})[2s]_q +
q^{-2s}(\tb q^{2s}+\tc q^{-2s})
-q^{-1}\tb^*\sum_{j=1}^{2s} (qy_j^2+q^{-1}y_j^{-2})\Big) \label{deigd1}\\ \nonumber
\quad &\mbox{for}& \quad \bar y = dS^{M(i)}_+\,.\ \\
\qquad \quad  \mbox{(resp.} \ \
\theta_N^{*}&=& q^{-4s}  \Big(  \tc (\zeta^2+\zeta^{-2})[2s]_q +
q^{2s}(\tb^* q^{2s}+\tc^* q^{-2s})
-q\tc\sum_{j=1}^{2s} (qv_j^2+q^{-1}v_j^{-2}) \Big) \label{deigd2}\\ \nonumber&\mbox{for}& \quad \bar v = dS^{*N(i)}_-\ )\ . 
\eeqa
\end{hyp}
\begin{lem}\label{lem:tMidual}  Assume Hypothesis \ref{hyp3}. The following relations hold: 
\beqa
\langle \theta_M| &=& \tilde{\cal N}^{(i)}_M(\bar v') \langle\Psi_{+}^{2s}(\bar v',m_0)| \qquad \mbox{for}\quad  {\bar v'}= dS_+^{M(i)}\ ,\label{normduali1}\\
\langle \theta_N^{*}| &=& \tilde{\cal N}^{*(i)}_N(\bar y') \langle\Psi_{-}^{2s}(\bar y',m_0)| \qquad \mbox{for}  \quad  {\bar y'}=  dS^{*N(i)}_-\ 
\label{normduali2}
\eeqa
with
\beqa
\tilde{\cal N}^{(i)}_M(\bar v')= \tilde{\cal N}_{2s}^{*}(\bar v')P_{M,2s}\frac{\xi_M}{\xi^*_{2s}}  \ ,\qquad
\tilde{\cal N}^{*(i)}_N(\bar y')=  \tilde{\cal N}_{2s}(\bar y')(P^{-1})_{N,2s}\frac{\xi^*_N}{\xi_{2s}}  \ .\label{Ncoeffduali}
\eeqa
\end{lem}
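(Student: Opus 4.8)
The plan is to run the argument of Lemma~\ref{lem:tMi} on the dual side, replacing the strings $B^{\epsilon}$ by the strings $C^{\epsilon}$ and the reference vectors by the dual reference covectors. I treat (\ref{normduali1}) in detail; the proof of (\ref{normduali2}) is identical after exchanging $\mathscr{C}^{+}\leftrightarrow\mathscr{C}^{-}$, the equations $\tilde{E}_{+}=0\leftrightarrow\tilde{E}_{-}=0$, and reading the eigenvalue from (\ref{deigd2}) instead of (\ref{deigd1}).

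First I would establish the dual counterpart of \cite[Proposition 3.3]{BP19}, specialized to $(\kappa,\kappa^*)=(1,0)$: that the off-shell dual Bethe state $\langle\Psi_{+}^{2s}(\bar v',m_0)|=\langle\Omega^{+}|\pi(C^{+}(\bar v',m_0,2s))$ is a left-eigenvector of $\pi(\tA)$. Concretely, one commutes $\pi(\tA)$ leftward through the string $C^{+}(\bar v',m_0,2s)$ by means of the exchange relations (\ref{comBdBd})--(\ref{comDdCd}), lets the surviving diagonal pieces act on $\langle\Omega^{+}|$ through the dual actions (\ref{actionADvacdual}) of Lemma~\ref{lem:diagonalaction}, and checks that the ``unwanted'' terms cancel precisely when $\bar v'$ solves the dual inhomogeneous Bethe equations $\tilde{E}_{+}(v'_j,\bar v'_j)=0$, with inhomogeneity $\tilde\nu_{+}$ as in (\ref{ddeltad}). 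This yields
\[
\langle\Psi_{+}^{2s}(\bar v',m_0)|\,\pi(\tA)=\theta(\bar v')\,\langle\Psi_{+}^{2s}(\bar v',m_0)|\,,
\]
where $\theta(\bar v')$ is the right-hand side of (\ref{deigd1}) with $y_j\to v'_j$.

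By the Leonard pair hypothesis and the parametrization (\ref{scpar1})--(\ref{scpar3}), every eigenvalue of $\pi(\tA)$ has the form (\ref{st}); hence for each admissible solution there is a unique integer $M$ with $\theta(\bar v')=\theta_M$, and I label the corresponding set $\bar v'=dS^{M(i)}_{+}$. Since the spectrum of $\pi(\tA)$ is non-degenerate, its left-eigenspace for $\theta_M$ is one-dimensional and spanned by $\langle\theta_M|$ (recall $\langle\theta_M|\pi(\tA)=\langle\theta_M|\theta_M$ from (\ref{tridAstardual})), so $\langle\theta_M|=\tilde{\cal N}^{(i)}_M(\bar v')\,\langle\Psi_{+}^{2s}(\bar v',m_0)|$ for some scalar. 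To fix it I expand $\langle\Psi_{+}^{2s}(\bar v',m_0)|$ in the dual eigenbasis $\{\langle\theta^*_N|\}$. As a polynomial in $\tA,\tA^*$, the string $C^{+}(\bar v',m_0,2s)$ has the same leading (top-raising) structure used in Lemma~\ref{lem:tMdual}, so its action on $\langle\Omega^{+}|=\langle\theta^*_0|$ produces $\langle\theta^*_{2s}|$ with coefficient $\tilde{\cal N}^{*}_{2s}(\bar v')^{-1}$, the factor $\tilde{\cal N}^{*}_{2s}$ being given by (\ref{Ncoeffdual}); that is, $\langle\Psi_{+}^{2s}(\bar v',m_0)|=\tilde{\cal N}^{*}_{2s}(\bar v')^{-1}\langle\theta^*_{2s}|+\cdots$, the dots denoting covectors $\langle\theta^*_N|$ with $N<2s$. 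Comparing the coefficient of $\langle\theta^*_{2s}|$ with the expansion $\langle\theta_M|=\sum_{N}(\xi_M/\xi^*_N)P_{MN}\langle\theta^*_N|$ from (\ref{transeqdual}) gives $\tilde{\cal N}^{(i)}_M(\bar v')\,\tilde{\cal N}^{*}_{2s}(\bar v')^{-1}=(\xi_M/\xi^*_{2s})P_{M,2s}$, which is exactly (\ref{Ncoeffduali}).

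The main obstacle is the first step, namely the dual inhomogeneous eigenvalue relation. Unlike the homogeneous case, where the Bethe and dual Bethe equations coincide, here the dual equations $\tilde{E}_{\pm}$ of (\ref{dBAEd}) genuinely differ from the equations $E_{\pm}$ of (\ref{BAEd}) (note the swapped powers $y_i^{\mp1}$, $(q^2y_i^3)^{\pm1}$ and the distinct inhomogeneity $\tilde\nu_{\pm}$). One must therefore redo the commutation of $\pi(\tA)$ through $C^{+}$ from scratch and verify that the residual terms reorganize into $\tilde{E}_{+}$ rather than $E_{+}$; tracking the signs and spectral-parameter shifts carefully here is the delicate part, especially given the sign correction to \cite{BP19} noted in the definition of $E_{\pm}$.
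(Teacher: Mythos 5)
Your proposal is correct and mirrors the paper's own proof: both establish the left-eigenvalue relation for $\langle\Psi_{+}^{2s}(\bar v',m_0)|$ under the dual inhomogeneous Bethe equations $\tilde{E}_{+}=0$ (the paper via the off-shell multiple-action formula for $\mathscr{C}^{+}(v,m_0+4s)$ with $\delta=\tb^{*}q^{4s}$, which you defer as the ``delicate part''), then invoke non-degeneracy of the spectrum of $\pi(\tA)$, and finally fix the normalization by matching the coefficient of $\langle\theta^{*}_{2s}|$ against the expansion (\ref{transeqdual}). The only difference is one of detail: the paper writes out the intermediate off-shell expansion and identifies $\theta(\bar v')$ from the singular part of $\lambda_{+}^{2s}(v,\bar v')$, while you outline the same computation and correctly flag exactly what must be verified.
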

\begin{proof} Consider (\ref{normduali1}) and assume Hypothesis \ref{hyp3}. Let us show:
\beqa
\langle \Psi_{+}^{2s}(\bar v',m_0)|\pi(\tA) =  \langle \Psi_{+}^{2s}(\bar v',m_0)| \theta(\bar v') \quad \mbox{for}\quad \bar v'=dS_+^{M(i)}\ ,\label{intA}
 \eeqa
where $\theta(\bar v')$ denotes the r.h.s. of (\ref{deigd1}) with $y_j \rightarrow v'_j$. The proof of (\ref{intA}) follows standard computations within the algebraic Bethe ansatz approach, using the material given in Appendix \ref{apA} together with 
the following relation that holds for all $v$ and  $v_i$, $i=1,...,2s$: 
\beqa
\langle \Psi_{+}^{2s}(\bar v,m_0)|   \bar\pi(\mathscr{C}^+(v,m_0+4s)) &=& \delta \frac{b(v^2)}{v} \frac{ \prod_{k=0}^{2s}b(q^{1/2+k-s}v\zeta)b(q^{1/2+k-s}v\zeta^{-1})}{\prod_{v_i\in \bar v}b(v/v_i)b(qvv_i)}   \langle \Psi_{+}^{2s}(\bar v,m_0)| \nonumber\\
&& - \delta\sum_{v_i\in \bar v} \frac{b(v_i^2)}{v_i} \frac{ \prod_{k=0}^{2s}b(q^{1/2+k-s}v_i\zeta)b(q^{1/2+k-s}v_i\zeta^{-1})}{b(v/v_i)b(qvv_i)  \prod_{j\neq i} b(v_i/v_j)b(qv_iv_j)}    \langle \Psi_{+}^{2s}(\{v,\bar v_i\},m_0)| \ 
\eeqa 
with $\delta = \tb^* q^{4s}$. Note that the proof of this latter relation being analog to \cite[Appendix C]{BP19}, we skip it. Firstly, one obtains:
\beqa
\langle \Psi_{+}^{2s}(\bar v,m_0)| \pi\left(\tA \right) = \langle \Psi_{+}^{2s}(\bar v,m_0)| \lambda_{+}^{2s}(v,{\bar v}) - \frac{vb(q)}{b(v^2)}\sum_{v_i\in \bar v} \frac{ E_{+}(v_i,\bar v_i)}{b(v/v_i)b(qvv_i)}  \langle \Psi_{+}^{2s}(\{v, \bar v_i\},m_0)|\ \label{specdiag}
\eeqa
with 
\beqa
\lambda_{+}^{2s}(v,{\bar v})&=&  \frac{v^{-1}}{b(v^2)b(qv^2)}  \prod_{v_j\in \bar v}f(v,v_j)\Lambda_1^+(v)  + \frac{q^2v^{3}}{b(v^2)b(q^2v^2)}  \prod_{v_j\in \bar v}h(v,v_j)\Lambda_2^+(v) - q\delta  \frac{ \prod_{k=0}^{2s}b(q^{1/2+k-s}v\zeta)b(q^{1/2+k-s}v\zeta^{-1})}{\prod_{v_i\in \bar v}b(v/v_i)b(qvv_i)}
\non\\ &+&  \frac{\left(q\,v\,\bar\eta(v)+q^{-1}v^{-1}\bar\eta(v^{-1})\right)}{b(v^2)b(q^2v^2)}   \      \nonumber
\eeqa
and
\beqa
E_{+}(v_i,\bar v_i) &=& \frac{b(v_i^2)}{v_i b(qv_i^2)}\prod_{v_j\in \bar v_i}f(v_i,v_j)\Lambda_1^+(v_i)
-
q^{2}v_i^{3}\prod_{v_j\in \bar v_i}h(v_i,v_j)\Lambda_2^+(v_i)  \label{BAEd}\\
&& + \delta \frac{qb(v_i^2)}{b(q)}\frac{ \prod_{k=0}^{2s}b(q^{1/2+k-s}v_i\zeta)b(q^{1/2+k-s}v_i\zeta^{-1})}{ \prod_{v_j\in \bar v_i}b(v_i/v_j)b(qv_iv_j)} \ \nonumber
\eeqa
for $i=1,\dots,2s$.  Secondly, by Hypothesis \ref{hyp3}, eq. (\ref{specdiag}) reduces to (\ref{intA}) for $\bar v =\bar v'=dS_+^{M(i)}$, where $\theta(\bar v')= \lambda_{+}^{2s}(v,{\bar v'})$. By studying the singular part of  $\lambda_{+}^{2s}(v,{\bar v'})$, one finds $\theta(\bar v')$ reduces to the r.h.s. of (\ref{deigd1}).  This concludes the proof of (\ref{intA}).   The absence of degeneracies in the spectrum of $\pi(\tA)$ in (\ref{tridAstardual}) implies that $\langle \theta_M|$ is proportional to $\langle \Psi_{+}^{2s}(\bar v',m_0)|$. The normalization coefficient $\tilde{\cal N}^{(i)}_M(\bar v')$ in (\ref{normduali1}) is determined through the comparison between
\beqa
\tilde{\cal N}^{(i)}_M(\bar v') \langle \Psi_{+}^{2s}(\bar v',m_0)| = \tilde{\cal N}^{(i)}_M(\bar v') \left( \tilde{\cal N}^{*}_{2s}(\bar v')^{-1}  \langle \theta^*_{2s}| + \cdots\right) \ 
\eeqa
and the second equation in (\ref{transeqdual}): 
\beqa
\langle\theta_{M}| = P_{M,2s}\frac{\xi_M}{\xi^*_{2s}}\langle \theta^*_{2s}| + \cdots 
\eeqa
The proof of (\ref{normduali2}) is done along the same line.
\end{proof}

\subsection{Relating homogeneous and inhomogeneous Bethe states}
In the algebraic Bethe ansatz framework, in general relating solutions of eigenproblems of homogeneous and inhomogeneous types might appear as a complicated problem. In the  case studied in the present letter, the connection between the eigenbases of Leonard pairs and Bethe eigenstates implies the following identities, straightforward consequences of previous results. From Lemmas \ref{lem:tM}, \ref{lem:tMdual}, \ref{lem:tMi}, \ref{lem:tMidual}, it follows:
\beqa
 {\cal N}_M(\bar u) |\Psi_{-}^M(\bar u,m_0)\rangle    &=&   {\cal N}^{(i)}_M(\bar u') |\Psi_{+}^{2s}(\bar u',m_0)\rangle            \qquad \mbox{for}\quad  \bar u = S^{M(h)}_-\ ,\ {\bar u'}= S_+^{M(i)}\ ,\label{eqBs1}\\
 {\cal N}_N^*(\bar w) |\Psi_{+}^N(\bar w,m_0)\rangle  &=&  {\cal N}^{*(i)}_N(\bar w') |\Psi_{-}^{2s}(\bar w',m_0)\rangle             \qquad \mbox{for}  \quad  \bar w = S^{*N(h)}_+\ ,\ {\bar w'}=  S^{*N(i)}_-\ \label{eqBs2}
\eeqa
and
\beqa
\tilde{\cal N}_M(\bar v)\langle \Psi_{-}^M(\bar v,m_0)| &=&  \tilde{\cal N}^{(i)}_M(\bar v') \langle\Psi_{+}^{2s}(\bar v',m_0)|     \qquad \mbox{for}  \quad  \bar v = S^{M(h)}_-\ ,\ {\bar v'}= dS_+^{M(i)}\ ,\label{eqBsdual1}\\
\tilde{\cal N}_N^{*}(\bar y)\langle \Psi_{+}^N(\bar y,m_0)| &=& \tilde{\cal N}^{*(i)}_N(\bar y') \langle\Psi_{-}^{2s}(\bar y',m_0)|\qquad \mbox{for}  \quad  \bar y = S^{*N(h)}_+ \ ,\ {\bar y'}=  dS^{*N(i)}_-\ .\label{eqBsdual2}
\eeqa

Specializing above identities, in particular the reference states of Definition \ref{defrefst} can be written as inhomogeneous Bethe states:
\beqa
|\Omega^{-}\rangle    &=&   {\cal N}^{(i)}_0(\bar u') |\Psi_{+}^{2s}(\bar u',m_0)\rangle            \qquad \mbox{for}\quad   {\bar u'}= S_+^{0(i)}\ ,\label{eqBs1}\\
|\Omega^{+}\rangle   &=&  {\cal N}^{*(i)}_0(\bar w') |\Psi_{-}^{2s}(\bar w',m_0)\rangle             \qquad \mbox{for}  \quad  {\bar w'}=  S^{*0(i)}_-\ .\label{eqBs2}
\eeqa

\subsection{The $q$-Racah polynomials}\label{sec:qrac}
From the results of the previous sections, the $q$-Racah polynomials can be expressed in terms of ratios of certain scalar products of Bethe states, either of homogeneous or inhomogenous type. Three examples are now displayed.
\begin{prop} The $q$-Racah polynomials are given by:
\beqa
R_M(\theta^*_N) =  {\cal N}^*_N(\bar v)^{-1} \frac{ \langle \Psi_{+}^N(\bar v,m_0)|\Psi_{-}^M(\bar u,m_0)\rangle}{ \langle \Omega^+|\Psi_{-}^M(\bar u,m_0)\rangle} \frac{ \langle \Omega^+|\Omega^+\rangle}{\langle \Psi_{+}^N(\bar v,m_0)|\Psi_{+}^N(\bar v,m_0)\rangle } \  \label{ratioRacah(h)}
\eeqa
for $\bar u=S_-^{M(h)}$, $\bar v=S_+^{*N(h)}$.
\end{prop}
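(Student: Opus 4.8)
The plan is to reduce the proposition to the already-established formula \eqref{Rac1} for the $q$-Racah polynomials and then substitute the Bethe-state representations of the eigenvectors and dual eigenvectors. Recall that \eqref{Rac1} states
\beqa
R_M(\theta^*_N) = \frac{\langle \theta_M |\theta_N^* \rangle}{\langle \theta_0 |\theta^*_N \rangle}\frac{\langle \theta_0 |\theta_0 \rangle}{\langle \theta_M |\theta_M \rangle}\ ,\nonumber
\eeqa
so the entire task is to rewrite each of the four scalar products appearing here in terms of the homogeneous Bethe states $|\Psi_{-}^M(\bar u,m_0)\rangle$, $|\Psi_{+}^N(\bar v,m_0)\rangle$ and the reference states. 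First I would invoke Lemma \ref{lem:tM}, equation \eqref{norm2}, to write $|\theta_N^*\rangle = {\cal N}_N^*(\bar v)|\Psi_{+}^N(\bar v,m_0)\rangle$ for $\bar v = S_+^{*N(h)}$, and Lemma \ref{lem:tM}, equation \eqref{norm1}, to write $|\theta_M\rangle = {\cal N}_M(\bar u)|\Psi_{-}^M(\bar u,m_0)\rangle$ for $\bar u = S_-^{M(h)}$. For the dual covectors $\langle\theta_0|$ and $\langle\theta_M|$ I would use Definition \ref{defrefst}, namely $\langle\theta_0^*|=\langle\Omega^+|$ together with $|\theta_0^*\rangle=|\Omega^+\rangle$, and express the remaining ket $|\theta_0\rangle=|\Omega^-\rangle$ via the reference-state identification as well.

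The second step is bookkeeping with the normalization factors. Substituting the four representations into \eqref{Rac1}, the factors ${\cal N}_M(\bar u)$ and ${\cal N}_N^*(\bar v)$ appear in both numerator and denominator. Concretely, $\langle\theta_M|\theta_N^*\rangle$ contributes ${\cal N}_M(\bar u)$ from the bra side and ${\cal N}_N^*(\bar v)$ from the ket side, while $\langle\theta_M|\theta_M\rangle$ in the denominator contributes ${\cal N}_M(\bar u)$ twice. The crucial observation is that the $\langle\theta_M|\cdot|$ normalization must cancel between $\langle\theta_M|\theta_N^*\rangle/\langle\theta_M|\theta_M\rangle$; after this cancellation one is left with a single surviving factor ${\cal N}_N^*(\bar v)^{-1}$, matching the prefactor in \eqref{ratioRacah(h)}. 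I would then use $\langle\theta_0|\theta_0\rangle=\langle\Omega^+|\Omega^+\rangle$ and $\langle\theta_0|\theta_N^*\rangle=\langle\Omega^+|\Psi_{+}^N(\bar v,m_0)\rangle$, exploiting the fact that the reference states are identified with the $0$-th eigenvectors, so that the ratio $\langle\theta_0|\theta_0\rangle/\langle\theta_0|\theta_N^*\rangle$ becomes exactly the middle two fractions of \eqref{ratioRacah(h)}.

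A subtle point requiring care is the distinction between the \emph{bra} associated to $\langle\theta_M|$ and the Bethe dual state $\langle\Psi_{-}^M|$: in \eqref{ratioRacah(h)} the scalar products are written with the dual Bethe states $\langle\Psi_{+}^N(\bar v,m_0)|$ appearing as bras, not with $\langle\theta_M|$. I expect the main obstacle to be verifying that the bra $\langle\theta_M|$ built from $\langle\Psi_{-}^M(\bar v,m_0)|$ in Lemma \ref{lem:tMdual} and the ket $|\theta_N^*\rangle$ built from $|\Psi_{+}^N(\bar v,m_0)\rangle$ are paired consistently so that the \emph{same} Bethe roots $\bar v = S_+^{*N(h)}$ index both the dual state in the scalar product and the normalization ${\cal N}_N^*(\bar v)$. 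One must confirm that using the dual eigenbasis of Lemma \ref{lem:tMdual} for $\langle\theta_M|$ does not introduce an extra $\tilde{\cal N}_M(\bar v)/{\cal N}_M(\bar u)$ discrepancy; here the orthogonality relation \eqref{orthcond} and the compatibility $\tilde A^*_{M,M'}=A^*_{M,M'}\xi_M/\xi_{M'}$ between the tridiagonal actions guarantee that the bra and ket normalizations are dual to one another.

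Once these cancellations are confirmed, the resulting expression is precisely \eqref{ratioRacah(h)}, and the proof is complete. The entire argument is thus a substitution into the identity \eqref{Rac1} followed by tracking normalization factors, with no new analytic input beyond Lemmas \ref{lem:tM} and \ref{lem:tMdual} and Definition \ref{defrefst}.
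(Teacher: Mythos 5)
There is a genuine gap: you start from (\ref{Rac1}), but the stated formula is the Bethe-state transcription of (\ref{Rac2}), and the two starting points do not lead to the same identity. In (\ref{Rac1}) all bras are $\langle \theta_M|$ and $\langle \theta_0|$, i.e.\ dual eigenvectors of $\pi(\tA)$; by Lemma \ref{lem:tMdual} (eq.~(\ref{normdual1})) and Definition \ref{defrefst} these are proportional to $\langle \Psi_{-}^M(\cdot,m_0)|$ and to $\langle \Omega^-|$, \emph{not} to $\langle \Psi_{+}^N|$ and $\langle \Omega^+|$. A correct substitution of (\ref{norm1}), (\ref{norm2}), (\ref{normdual1}) into (\ref{Rac1}) therefore yields a formula whose bras are of ``$-$'' type and whose surviving prefactor is ${\cal N}_M(\bar u)^{-1}$ (the factor $\tilde{\cal N}_M$ cancels between $\langle \theta_M|\theta_N^*\rangle$ and $\langle \theta_M|\theta_M\rangle$, while ${\cal N}_N^*$ cancels against $\langle \theta_0|\theta_N^*\rangle$) --- a mirror identity, not the one claimed, whose bras $\langle \Psi_{+}^N(\bar v,m_0)|$, $\langle \Omega^+|$ and prefactor ${\cal N}_N^*(\bar v)^{-1}$ correspond to the covectors $\langle \theta_N^*|$, $\langle \theta_0^*|$ appearing in (\ref{Rac2}). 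The paper's proof accordingly inserts (\ref{norm1}), (\ref{norm2}) and (\ref{normdual2}) into (\ref{Rac2}); the representation (\ref{normdual2}), $\langle \theta_N^*| = \tilde{\cal N}_N^*(\bar y)\langle \Psi_{+}^N(\bar y,m_0)|$, is never used in your argument, yet it is the only way the bra $\langle \Psi_{+}^N(\bar v,m_0)|$ can arise.

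The identifications you use to bridge the mismatch are false. You set $\langle \theta_0| = \langle \Omega^+|$, hence $\langle \theta_0|\theta_0\rangle = \langle \Omega^+|\Omega^+\rangle$ and $\langle \theta_0|\theta_N^*\rangle = \langle \Omega^+|\Psi_{+}^N(\bar v,m_0)\rangle$; but Definition \ref{defrefst} gives $\langle \theta_0| = \langle \Omega^-|$, the ``$+$'' reference covector being $\langle \theta_0^*|$, a different linear functional, and by (\ref{orthcond}) one has $\langle \Omega^-|\Omega^-\rangle = \xi_0$ while $\langle \Omega^+|\Omega^+\rangle = \xi_0^*$, which are unrelated in general. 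Nor can the pairing be silently transposed: from (\ref{transcal}) and (\ref{PMN}), $\langle \theta_M|\theta_N^*\rangle = k_N R_M(\theta_N^*)\,\xi_M$ whereas $\langle \theta_N^*|\theta_M\rangle = \nu_0^{-1}k_M^* R_M(\theta_N^*)\,\xi_N^*$, so the two numerators are different numbers and the duality relation $\tilde A^*_{M,M'} = A^*_{M,M'}\xi_M/\xi_{M'}$ you invoke does not equate them. The repair is short: start from (\ref{Rac2}), insert (\ref{norm1}) for $|\theta_M\rangle$, (\ref{norm2}) for $|\theta_N^*\rangle$, (\ref{normdual2}) for $\langle \theta_N^*|$, and Definition \ref{defrefst} for $\langle \theta_0^*|$, $|\theta_0^*\rangle$; then $\tilde{\cal N}_N^*(\bar v)$ and ${\cal N}_M(\bar u)$ cancel and the single factor ${\cal N}_N^*(\bar v)^{-1}$ survives, which is exactly the stated result.
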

\begin{proof} 
Recall Hypothesis \ref{hyp1}. Consider the expression of the $q$-Racah polynomials given by (\ref{Rac2}). Insert (\ref{norm1}), (\ref{norm2}), (\ref{normdual2}).
\end{proof}
Variations of above expression can be derived using the correspondence between eigenvectors of elements of the Leonard pair and inhomogeneous Bethe states. For instance, recall Hypothesis \ref{hyp1}, \ref{hyp2}. Insert (\ref{normdual1}), (\ref{normi1}), (\ref{normi2})  in (\ref{Rac1}). The $q$-Racah polynomials are now given by:
\beqa
R_M(\theta^*_N) =  \cal N_M(\bar u)^{-1} \frac{ \langle \Psi_{-}^{M}(\bar v,m_0)|\Psi_{-}^{2s}(\bar w,m_0)\rangle}{ \langle \Omega^-|\Psi_{-}^{2s}(\bar w,m_0)\rangle} \frac{ \langle \Omega^-|\Omega^-\rangle}{\langle \Psi_{-}^{M}(\bar v,m_0)|\Psi_{+}^{2s}(\bar u,m_0)\rangle } \  \label{ratioRacah(h)}
\eeqa
for $\bar v=S_-^{M(h)}$, $\bar w=S_-^{*N(i)}$, $\bar u=S_+^{M(i)}$.
Another example is obtained as follows. Insert (\ref{normi1}), (\ref{normi2}), (\ref{normduali2})  in (\ref{Rac2}). One gets:
\beqa
R_M(\theta^*_N) =  \tilde{\cal N}^*_N(\bar y')^{-1} \frac{ \langle \Psi_{-}^{2s}(\bar y',m_0)|\Psi_{+}^{2s}(\bar u,m_0)\rangle}{ \langle \Omega^+|\Psi_{+}^{2s}(\bar u,m_0)\rangle} \frac{ \langle \Omega^+|\Omega^+\rangle}{\langle \Psi_{-}^{2s}(\bar y',m_0)|\Psi_{-}^{2s}(\bar y,m_0)\rangle } \  \label{ratioRacah(h)}
\eeqa
for $\bar u=S_+^{M(i)}$, $\bar y=S_-^{*N(i)}$, $\bar y'=dS_-^{*N(i)}$.

\section{Concluding remarks}
The main result of this letter is a correspondence between the $q$-Racah polynomials and certain ratios of scalar products of Bethe states associated with Bethe equations of either homogeneous or inhomogeneous type. Clearly, eigenbases for other examples of Leonard pairs and related orthogonal polynomials of the discrete Askey-scheme  may 
be studied along the same line as limiting cases. For instance, the Racah type $q=1$ may be studied using the results in \cite{Nico1,Nico2}.\vspace{1mm} 

Some perspectives are now presented. 
Firstly, although not discussed here, let us mention that the modified algebraic Bethe ansatz formalism  applied to the diagonalization of $\pi(\tA^*)$ (or equivalently $\pi(\tA)$)  generates three different Baxter TQ-relations. The first TQ-relation is of homogeneous type, and coincides with the second-order $q$-difference equation for the Askey-Wilson polynomials. As a well-known fact, the zeroes of the Askey-Wilson polynomials are characterized by Bethe equations of homogeneous type. For details, see \cite[Subsection 4.3]{BP19}. The second and third TQ-relations are of inhomogeneous type, and also admit polynomial solutions. In those two cases, the zeroes of the new polynomials satisfy Bethe equations of inhomogeneous type. Due to the fact that each TQ-relation inherits from the theory of Leonard pairs of $q$-Racah type, it suggests that a classification of the corresponding inhomogeneous Bethe equations according to the Askey-scheme - as well as related new polynomials - should be investigated further.\vspace{1mm}

Secondly, we recall that scalar products of Bethe states may be written in terms of
certain compact determinant formulas involving the Bethe roots.
It is certainly worth trying to obtain analogous expressions for the various
types of scalar products entering the $q$-Racah formulae. The best route to that is to
use the modern method introduced in \cite{BS19}. Indeed, this method was successfully applied to compute scalar products similar to those that enter the $q$-Racah formulae \cite{BPS21,BS21}.
\vspace{1mm}

Thirdly, it is known that irreducible finite dimensional representations of the $q$-Onsager algebra \cite{Ter03,Bas04} generated by $\tW_0,\tW_1$ are classified according to the theory of tridiagonal pairs of $q$-Racah type \cite{IT09} that generalizes the theory of Leonard pairs. For a tridiagonal pair of $q$-Racah type, a characterization of the entries of the transition matrices relating the eigenbases of $\tW_0$ to the eigenbases of $\tW_1$ (and more generally the eigenbases of each family of so-called alternating generators \cite{Bas04,BS09,BB17,T21a,T21b}) in terms of special functions is an open problem. Adapting the analysis presented here based on the modified algebraic Bethe ansatz, we expect this problem can be solved in terms of ratios of scalar products of Bethe states associated with the open XXZ spin chain for special boundary conditions.\vspace{1mm} 

Some of these problems will be discussed elsewhere. 

\vspace{0.2cm}

\noindent{\bf Acknowledgments:} We thank S. Belliard, N. Cramp\'e and P. Terwilliger for comments on the manuscript. 
P.B.  is supported by C.N.R.S. R.A.P. acknowledges the support by the German Research Council (DFG) via the Research Unit FOR 2316.

\begin{appendix}

\section{Askey-Wilson generators and the dynamical operators}\label{apA}
In this section, we recall the precise relationship between the generators of the Askey-Wilson algebra $\tA,\tA^*$ and the so-called 
 dynamical operators $ \{\mathscr{A}^{\epsilon}(u,m), \mathscr{B}^{\epsilon}(u,m),\mathscr{C}^{\epsilon}(u,m),\mathscr{D}^{\epsilon}(u,m)\}$ that arise within the inverse scattering framework. The exchange relations satisfied by the dynamical operators are first recalled. We refer to \cite{BP19} for details. Note that for the analysis in this paper, the left and right actions of the dynamical operators $\mathscr{A}^{\epsilon}(u,m),\mathscr{B}^{\epsilon}(u,m)$ on certain products of $\mathscr{B}^{\epsilon}(u,m)$ and $\mathscr{C}^{\epsilon}(u,m)$ are also needed.

\subsection{The dynamical operators and exchange relations}
Let $m$ be a positive integer and $\epsilon= \pm1$. The dynamical operators $ \{\mathscr{A}^{\epsilon}(u,m), \mathscr{B}^{\epsilon}(u,m),\mathscr{C}^{\epsilon}(u,m),\mathscr{D}^{\epsilon}(u,m)\}$ are subject to the exchange relations:
\ben
\qquad \mathscr{B}^{\epsilon}(u,m+2)\mathscr{B}^{\epsilon}(v,m) &=& \mathscr{B}^{\epsilon}(v,m+2)\mathscr{B}^{\epsilon}(u,m),\label{comBdBd} \\
\qquad \mathscr{A}^{\epsilon}(u,m+2)\mathscr{B}^{\epsilon}(v,m)&=&f(u,v)\mathscr{B}^{\epsilon}(v,m)\mathscr{A}^{\epsilon}(u,m) \label{comAdBd}\\
&& + g(u,v,m)\mathscr{B}^{\epsilon}(u,m)\mathscr{A}^{\epsilon}(v,m) + w(u,v,m)\mathscr{B}^{\epsilon}(u,m)\mathscr{D}^{\epsilon}(v,m),\nonumber  \\
 \mathscr{D}^{\epsilon}(u,m+2)\mathscr{B}^{\epsilon}(v,m)&=& h(u,v)\mathscr{B}^{\epsilon}(v,m) \mathscr{D}^{\epsilon}(u,m) 
,\label{comDdBd}\\
&&+k(u,v,m)\mathscr{B}^{\epsilon}(u,m)\mathscr{D}^{\epsilon}(v,m)+ n(u,v,m)\mathscr{B}^{\epsilon}(u,m)\mathscr{A}^{\epsilon}(v,m),\nonumber\\
\mathscr{C}^{\epsilon}(u,m+2)\mathscr{B}^{\epsilon}(v,m)&=& \mathscr{B}^{\epsilon}(v,m-2)\mathscr{C}^{\epsilon}(u,m)  \label{comcdBd} \\
&& +q(u,v,m) \mathscr{A}^{\epsilon}(v,m)\mathscr{D}^{\epsilon}(u,m)+r(u,v,m)\mathscr{A}^{\epsilon}(u,m)\mathscr{D}^{\epsilon}(v,m)
\nonumber\\
&& +s(u,v,m) \mathscr{A}^{\epsilon}(u,m)\mathscr{A}^{\epsilon}(v,m)+x(u,v,m)\mathscr{A}^{\epsilon}(v,m)\mathscr{A}^{\epsilon}(u,m)
\nonumber\\
&&+y(u,v,m) \mathscr{D}^{\epsilon}(u,m)\mathscr{A}^{\epsilon}(v,m)+z(u,v,m)\mathscr{D}^{\epsilon}(u,m)\mathscr{D}^{\epsilon}(v,m) \nonumber
\een
and
\ben
\qquad \mathscr{C}^{\epsilon}(u,m-2)\mathscr{C}^{\epsilon}(v,m) &=& \mathscr{C}^{\epsilon}(v,m-2)\mathscr{C}^{\epsilon}(u,m),\label{comCdCd} \\
 \qquad  \mathscr{C}^{\epsilon}(v,m+2)\mathscr{A}^{\epsilon}(u,m+2)&=&f(u,v)\mathscr{A}^{\epsilon}(u,m)\mathscr{C}^{\epsilon}(v,m+2)  \label{comAdCd}
\\ && +  g(u,v,m)\mathscr{A}^{\epsilon}(v,m)\mathscr{C}^{\epsilon}(u,m+2) + w(v,u,m)\mathscr{D}^{\epsilon}(v,m)\mathscr{C}^{\epsilon}(u,m+2), \nonumber \\
 \qquad  \mathscr{C}^{\epsilon}(v,m+2)\mathscr{D}^{\epsilon}(u,m+2)&=& h(u,v)\mathscr{D}^{\epsilon}(u,m) \mathscr{C}^{\epsilon}(v,m+2) \label{comDdCd}
\\ &&+k(u,v,m)\mathscr{D}^{\epsilon}(v,m)\mathscr{C}^{\epsilon}(u,m+2)+ n(u,v,m)\mathscr{A}^{\epsilon}(v,m)\mathscr{C}^{\epsilon}(u,m+2)\,\nonumber
\een
where the coefficients are given by
\ben
\nonumber&&f(u,v)= \frac{b(qv/u)b(uv)}{b(v/u)b(quv)}\,,\quad h(u,v)= \frac{b(q^2uv)b(qu/v)}{b(quv)b(u/v)},\\
\nonumber&&g(u,v,m)=\frac{\gamma(u/v,m+1)}{\gamma(1,m+1)}\frac{b(q) b\left(v^2\right)}{b\left(q v^2\right) b\left(\frac{u}{v}\right)},
\quad w(u,v,m)=-\frac{\gamma(uv,m)}{\gamma(1,m+1)}\frac{b(q)}{b(q u v)},\\
\nonumber &&k(u,v,m)=\frac{ \gamma(v/u,m+1)}{\gamma(1,m+1)}\frac{b(q) b\left(q^2 u^2\right)}{b\left(q u^2\right) b\left(\frac{v}{u}\right)}, \quad
n(u,v,m)=\frac{\gamma(1/(uv),m+2)}{\gamma(1,m+1)} \frac{b(q) b\left(v^2\right) b\left(q^2 u^2\right)}{b\left(q u^2\right) b\left(q v^2\right)
   b(q u v)}\,\,,
	\een
\ben
&&q(u,v,m)=\frac{ \gamma \left(u/v,m\right)b(q) b(u v)}{\gamma (1,m+1) b\left(u/v\right) b(q u v)}\,,\quad
r(u,v,m)=\frac{b(q) b\left(u^2\right) \gamma (1,m) \gamma \left(v/u,m+1\right)}{\gamma (1,m+1)^2 b\left(q
   u^2\right) b\left(v/u\right)}\,,\\
\nonumber&&s(u,v,m)=\frac{b(q)^2 b\left(u^2\right) \gamma \left(v^{-2},m+1\right) \gamma \left(v/u,m+1\right)}{\gamma
   (1,m+1)^2 b\left(q u^2\right) b\left(q v^2\right) b\left(\frac{v}{u}\right)}\,,
\quad
x(u,v,m)=\frac{b(q) b\left(u^2\right) b\left(q u/v\right) \gamma \left(1/(uv),m+1\right)}{\gamma (1,m+1)
   b\left(q u^2\right) b\left(u/v\right) b(q u v)}\,,\\
\nonumber&&y(u,v,m)=-\frac{b(q)^2 \gamma \left(v^{-2},m+1\right) \gamma (u v,m)}{\gamma (1,m+1)^2 b\left(q v^2\right) b(q u v)}\,,
\quad
z(u,v,m)=-\frac{b(q) \gamma (1,m) \gamma (u v,m)}{\gamma (1,m+1)^2 b(q u v)}\,.\nonumber
\een
where
\ben
 \gamma^\epsilon (u,m)= \alpha ^{\frac{1-\epsilon }{2}} \beta ^{\frac{\epsilon +1}{2}}
   q^{-m} u -\alpha ^{\frac{\epsilon +1}{2}} \beta ^{\frac{1-\epsilon }{2}}
   q^m u^{-1} \ .\label{gam}
\een

\subsection{Products of dynamical operators and exchange relations}
Using the exchange relations above, the action of the entries on products of $\mathscr{B}^{\epsilon}(u,m),\mathscr{C}^{\epsilon}(u,m)$ can be derived. Recall the strings of length $M$ of operators $ \mathscr{B}^{\epsilon}(u_i,m)$ defined in (\ref{SB}). Define
\ben
B^{\epsilon}(\{u,\bar u_i\},m,M)&=&\mathscr{B}^{\epsilon}(u_1,m+2(M-1))\cdots \mathscr{B}^{\epsilon}(u,m+2(M-i)) \dots \mathscr{B}^{\epsilon}(u_M,m)\,.\label{SB2}
\een
Using the relations (\ref{comBdBd})-(\ref{comDdBd}),
one shows that the action of the dynamical operators
$\{\mathscr{A}^{\epsilon}(u,m),\mathscr{D}^{\epsilon}(u,m)\}$ on the string (\ref{SB}) is given by:
\ben\label{AonSB}
\qquad \mathscr{A}^{\epsilon}(u,m+2M)B^{\epsilon}(\bar u,m,M)&=&
\prod_{i=1}^Mf(u,u_i) B^{\epsilon}(\bar u,m,M)\mathscr{A}^{\epsilon}(u,m)\\
&+&
\sum_{i=1}^M g(u,u_i,m+2(M-1))\! \! \!\prod_{j=1,j\neq i}^M  \! f(u_i,u_j)   B^{\epsilon}(\{u,\bar u_i\},m,M)\mathscr{A}^{\epsilon}(u_i,m)
\nonumber
\\
&+&
\sum_{i=1}^M w(u,u_i,m+2(M-1)) \! \! \!\prod_{j=1,j\neq i}^M  \! h(u_i,u_j)   B^{\epsilon}(\{u,\bar u_i\},m,M)\mathscr{D}^{\epsilon}(u_i,m)
\nonumber
\een
and
 \ben\label{DonSB}
\qquad  \mathscr{D}^{\epsilon}(u,m+2M)B^{\epsilon}(\bar u,m,M)&=&
\prod_{i=1}^M h(u,u_i) B^{\epsilon}(\bar u,m,M)\mathscr{D}^{\epsilon}(u,m)\\\nonumber
&+&\sum_{i=1}^M k(u,u_i,m+2(M-1))\! \! \!\prod_{j=1,j\neq i}^M  \! h(u_i,u_j)   B^{\epsilon}(\{u,\bar u_i\},m,M)\mathscr{D}^{\epsilon}(u_i,m)\\
&+&\sum_{i=1}^M n(u,u_i,m+2(M-1)) \! \! \!\prod_{j=1,j\neq i}^M  \! f(u_i,u_j)   B^{\epsilon}(\{u,\bar u_i\},m,M)\mathscr{A}^{\epsilon}(u_i,m)\,.\nonumber
\een

In addition to (\ref{stringC1}), the following strings of dynamical operators $\mathscr{C}^{\epsilon}(v_i,m)$ with length $N$ is defined:
\ben
C^{\epsilon}(\{v,\bar v_i\},m,N)&=&\mathscr{C}^{\epsilon}(v_1,m+2)\cdots \mathscr{C}^{\epsilon}(v,m+2i) \cdots \mathscr{C}^{\epsilon}(v_N,m+2N)\,.\label{stringC2}
\een
Using the  relations (\ref{comCdCd})-(\ref{comDdCd}),
we can similarly  show that the left action of the `diagonal' dynamical operators
$\{\mathscr{A}^{\epsilon}(v,m),\mathscr{D}^{\epsilon}(v,m)\}$ on the string (\ref{stringC1}) is given by,
\ben\label{AonSC}
&&C^{\epsilon}(\bar v,m,N)\mathscr{A}^{\epsilon}(v,m+2N)=
f(v,\bar v)\mathscr{A}^{\epsilon}(v,m)C^{\epsilon}(\bar v,m,N)\\
&&\qquad\qquad+
\sum_{i=1}^N g(v,v_i,m+2(N-1))f(v_i,\bar v_i)\mathscr{A}^{\epsilon}(v_i,m)C^{\epsilon}(\{v,\bar v_i\},m,N)
\nonumber
\\
&&\qquad\qquad+
\sum_{i=1}^N w(v,v_i,m+2(N-1))h(v_i,\bar v_i)\mathscr{D}^\epsilon(v_i,m)C^{\epsilon}(\{v,\bar v_i\},m,N)
\nonumber
\een
and
\ben\label{DonSC}
&&C^{\epsilon}(\bar v,m,N)\mathscr{D}^\epsilon(v,m+2N)=
h(v,\bar v)\mathscr{D}^\epsilon(v,m)C^{\epsilon}(\bar v,m,N)\\\nonumber
&&\qquad\qquad+\sum_{i=1}^N k(v,v_i,m+2(N-1))h(v_i,\bar v_i)\mathscr{D}^\epsilon(v_i,m)C^{\epsilon}(\{v,\bar v_i\},m,N)\\
&&\qquad\qquad+\sum_{i=1}^N n(v,v_i,m+2(N-1))f(v_i,\bar v_i)\mathscr{A}^\epsilon(v_i,m)C^{\epsilon}(\{v,\bar v_i\},m,N).\nonumber
\een

\subsection{Expressions of $\tA,\tA^*$ in terms of the dynamical operators}
 The expression of the Askey-Wilson algebra generators in terms of the dynamical operators previously introduced is now recalled.  As shown in \cite{BP19},  according to the choice $\epsilon=\pm 1$ the element $\tA$ can be expressed in two different ways. For instance, in terms of the dynamical operators for $\epsilon=-1$ one has:
\beqa
\label{defAepm}\textsf{A}&=&{\mathbb{A}}^-(u,m)  +\frac{\left(q\,u\,\bar\eta(u)+q^{-1}u^{-1}\bar\eta(u^{-1})\right)}{(u^2-u^{-2})(q^2u^2-q^{-2}u^{-2})}\quad \mbox{with}\quad 
\bar\eta(u)=(q+q^{-1})\rho^{-1}\left(\eta u+\eta^*u^{-1}\right)\, ,
\eeqa
where, for further convenience, we have introduced:
\beqa
&&\label{Aepm}{\mathbb{A}}^-(u,m)= \frac{u^{-1}}{(u^2-u^{-2})}\left(
\frac{1}{(qu^2-q^{-1}u^{-2})}\mathscr{A}^{-}(u,m) +
\frac{1}{(q^2u^2-q^{-2}u^{-2})}\mathscr{D}^{-}(u,m)\right)\,.
\eeqa

Alternatively, for $\epsilon=+1$ a slightly more complicated expression in terms of the dynamical operators  is obtained. Namely:
\beqa\label{defAepp}
\textsf{A}&=&\tilde{\mathbb{A}}^+(u,m)  +\frac{\left(q\,u\,\bar\eta(u)+q^{-1}u^{-1}\bar\eta(u^{-1})\right)}{(u^2-u^{-2})(q^2u^2-q^{-2}u^{-2})}\ ,\
\eeqa
where (recall the definition (\ref{gam}))
\beqa
\label{Asepm}\tilde{\mathbb{A}}^+(u,m)&=&\frac{u}{(u^2-u^{-2})}     \left( \frac{\gamma^+\left(q^{-1} u^{-2},m  \right)}
{(qu^2-q^{-1}u^{-2}) \gamma^+(1,m+1)} 
\mathscr{A}^{+}(u,m) +
\frac{\gamma^+ \left(q u^2,m\right)}{ (q^2u^2-q^{-2}u^{-2})  \gamma^+(1,m+1)}
\mathscr{D}^{+}(u,m)\right.
\nonumber\\
&&\qquad\qquad\qquad   \left.+
\frac{\alpha  q^{m+1}}
{ \gamma^+(1,m)} \mathscr{B}^{+}(u,m)- \frac{\beta  q^{1-m}} {\gamma^+ (1,m)} \mathscr{C}^{+}(u,m)\right)\ . \nonumber\,
\eeqa

In the main text, we also need the expression of $\tA^*$ for the cases $\epsilon=\pm 1$. For $\epsilon=+ 1$, it reads:
\beqa
\label{defAsepp}\textsf{A}^*&=& {\mathbb{A}}^+(u,m)  +\frac{\left(q\,u\,\bar\eta(u^{-1})+q^{-1}u^{-1}\bar\eta(u)\right)}{(u^2-u^{-2})(q^2u^2-q^{-2}u^{-2})} \, 
\eeqa
where we have introduced:
\beqa
&&\label{Asepp}
{\mathbb{A}}^+(u,m)=  \frac{u}{(u^2-u^{-2})}\left(
\frac{1}{(qu^2-q^{-1}u^{-2})}\mathscr{A}^{+}(u,m) +
\frac{1}{(q^2u^2-q^{-2}u^{-2})}\mathscr{D}^{+}(u,m)\right)\,.
\eeqa

For $\epsilon=- 1$, one has:
\beqa\label{defAsepm}
\tA^*=\tilde{\mathbb{A}}^-(u,m) +\frac{\left(q\,u\,\bar\eta(u^{-1})+q^{-1}u^{-1}\bar\eta(u)\right)}{(u^2-u^{-2})(q^2u^2-q^{-2}u^{-2})} 
\eeqa
with
\beqa
\label{Asepm}\tilde{\mathbb{A}}^-(u,m)&=&\frac{u^{-1}}{(u^2-u^{-2})}     \left( \frac{\gamma^-\left(q^{-1} u^{-2},m  \right)}
{(qu^2-q^{-1}u^{-2}) \gamma^-(1,m+1)} 
\mathscr{A}^{-}(u,m) +
\frac{\gamma^- \left(q u^2,m\right)}{ (q^2u^2-q^{-2}u^{-2})  \gamma^-(1,m+1)}
\mathscr{D}^{-}(u,m)\right.
\nonumber\\
&&\qquad\qquad\qquad   \left.+
\frac{\alpha  q^{-m-1}}
{ \gamma^-(1,m)} \mathscr{B}^{-}(u,m)- \frac{\beta  q^{m-1}} {\gamma^- (1,m)} \mathscr{C}^{-}(u,m)\right)\ \nonumber\ .
\eeqa

The expressions of $\tA,\tA^*$ in terms of the dynamical operators together with the exchange relations above are used to compute the action of $\tA,\tA^*$ on the so-called Bethe states.
\vspace{3mm}

\end{appendix}

\vspace{5mm}

\end{document}